	\def\version{arxiv}
\def\arxivversion{arxiv}
\def\localversion{local}
\def\algorithmicaversion{algorithmica}
\newcommand\ifarxiv[2][]{\ifthenelse{\equal{\version}{\arxivversion}}{#2}{#1}}
\newcommand\iflocal[2][]{\ifthenelse{\equal{\version}{\localversion}}{#2}{#1}}
\newcommand\ifalgo[2][]{\ifthenelse{\equal{\version}{\algorithmicaversion}}{#2}{#1}}
	\journalname{Algorithmica}
	\let\colonequals\coloneqq
	  \DeclareFontShape{\encodingdefault}{\rmdefault}{m}{sl}{<-> ptmro8t}{}%
	  \DeclareFontShape{\encodingdefault}{\rmdefault}{b}{sl}{<-> ptmbo8t}{}%
	  \DeclareFontShape{\encodingdefault}{\rmdefault}{bx}{sl}{<->ssub * ptm/b/sl}{}%
\LetLtxMacro{\oldtodo}{\todo}
\renewcommand{\todo}[2][]{\tikzexternaldisable\oldtodo[#1]{#2}\tikzexternalenable}
\xspaceaddexceptions\expandafter{%
	  \csname ngerman@sh@"@=@\endcsname
	}
  \pgfplotsset{compat=1.12}
  \newcommand{\pgfplotstableread}[2]{}
\tikzset{
	external/export=true,
	external/mode=list and make
}
\newtheoremstyle{proofstyle}%
  {\item[\theorem@headerfont\hskip\labelsep ##1\theorem@separator]}%
  {\item[\theorem@headerfont\hskip\labelsep ##1 of ##3\theorem@separator]}
\theoremstyle{break}
\newtheorem{theorem}{Theorem}[section]
\theoremstyle{plain}
\newtheorem{lemma}[theorem]{Lemma}
\newtheorem{corollary}[theorem]{Corollary}
\newtheorem{definition}[theorem]{Definition}
\theoremstyle{plain}
\newtheorem{algorithm}{Algorithm}
\newtheorem{problem}{Problem}
\newtheorem{example}[theorem]{Example}
\def\continuedexref{ex:dummyref}
\newenvironment{examplectd}[1]
  {\def\continuedexref{#1}\examplecontinued}
  {\endexamplecontinued}
\theoremstyle{proofstyle}
\newtheorem{proof}{Proof}
\newcommand{\N}{\ensuremath{\mathbb{N}}}
\newcommand{\Q}{\ensuremath{\mathbb{Q}}}
\newcommand{\Oh}{\ensuremath{O}\xspace}
\newcommand{\oh}{\ensuremath{o}\xspace}
\newcommand{\Th}{\ensuremath{\Theta}\xspace}
\newcommand{\Om}{\ensuremath{\Omega}\xspace}
\newcommand\harm[1]{\ensuremath{H_{#1}}}
\newcommand{\mset}[1]{\boldsymbol{\mathbf{#1}}}
\newcommand\ce{\colonequals}
\newcommand\rel[1]{\mathrel{\:{#1}\:}}
\newcommand\wrel[1]{\mathrel{\;{#1}\;}}
\newcommand\wwrel[1]{\mathrel{\;\;{#1}\;\;}}
\newcommand{\cuts}{\ensuremath{c}\xspace}
\newcommand{\lpieces}{\ensuremath{m}\xspace}
\newcommand{\allpieces}{\ensuremath{p}\xspace}
\newcommand{\Cuts}{\cuts}
\newcommand{\Lpieces}{\lpieces}
\newcommand{\Feasible}{\ensuremath{\operatorname{Feasible}}\xspace}
\newcommand{\optlength}{\ensuremath{l^{\star}}\xspace}
\newcommand{\candidates}{\ensuremath{\mathcal{C}}\xspace}
\newcommand{\allcandidates}{\ensuremath{\candidates_{\mathrm{all}}}\xspace}
\newcommand{\candidatesmultiset}{\ensuremath{\mset{\mathcal{C}}}\xspace}
\newcommand{\allcandidatesmultiset}{\ensuremath{\candidatesmultiset_{\mathrm{all}}}\xspace}
\newcommand{\Ico}{\ensuremath{I_{\mathrm{co}}}\xspace}
\newcommand{\Lco}{\ensuremath{L_{\mathrm{co}}}\xspace}
\newcommand{\cutproblemascii}{Envy-Free Fixed-Length Stick Division}
\newcommand{\cutproblem}{\hyperref[prob:cutproblem]{\textsf{\cutproblemascii}}\xspace}
\newcommand{\optproblemascii}{Envy-Free Stick Division\xspace}
\newcommand{\optproblem}{\hyperref[prob:optproblem]{\textsf{\optproblemascii}}\xspace}
\newcommand{\canonicalcutalg}{\textsc{CanonicalCutting}\xspace}
\newcommand{\algascii}{SearchLstar}
\newcommand{\alg}[2][]{%
  \hyperref[alg:searchalg]{%
    \ensuremath{%
      \textsc{\algascii}#1%
      \ifthenelse{\isempty{#2}}{%
        % No suffix for general reference
      }{% Show parameters
        \langle #2 \rangle%
      }%
    }%
  }%
  \xspace%
}
\newcommand{\coolalgascii}{SelectLstar}
\newcommand{\coolalg}[2][]{%
  \hyperref[alg:selalg]{%
    \ensuremath{%
      \textsc{\coolalgascii}#1%
      \ifthenelse{\isempty{#2}}{%
        % No suffix for general reference
      }{% Show parameters
        \langle #2 \rangle%
      }%
    }%
  }%
  \xspace%
}
	\newenvironment{indented}{%
	  \begin{adjustwidth}{2em}{2em}
	}{%
	  \end{adjustwidth}
	}
	\newenvironment{indented}{%
		\smallskip
	  \begin{adjustwidth}{2em}{2em}
	}{%
	  \end{adjustwidth}
	  \smallskip
	}
\newdimen\makeboxdimen
\newcommand\plaincenter[1]{%
	\mbox{}\hfill#1\hfill\mbox{}%
}
\def\mydots{\xleaders\hbox to.75em{\hfill.\hfill}\hfill}
\newlength\tmpLenNotations
\newenvironment{notations}[1][10em]{%
	\small
	\newcommand\notationentry[1]{%
		\settowidth\tmpLenNotations{##1}%
		\ifthenelse{\lengthtest{\tmpLenNotations > \labelwidth}}{%
			\parbox[b]{\labelwidth}{%
				\makebox[0pt][l]{##1}\\%
			}%
		}{%
			\mbox{##1}%
		}%
		\mydots\relax%
	}%
	\begin{list}{}{%
		\setlength\labelsep{0em}%
		\setlength\labelwidth{#1}%
		\setlength\leftmargin{\labelwidth+\labelsep+1em}%
	}
	\newcommand\notation[1]{\item[{##1}]}
	\iflocal{%
		\renewcommand\notation[1]{\item[{##1}] \marginline{\tiny\texttt{\detokenize{##1}}}}
	}
	\raggedright
}{%
	\end{list}
}
\title{%
	Building Fences Straight and High -- 
	An Optimal Algorithm for Finding the Maximum Length 
	You Can Cut \textit{k} Times from Given Sticks
}
	\author{%
		Raphael Reitzig%
		\footnote{%
			Recreational Researcher, 
			formerly at University of Kaiserslautern,
			\texttt{reitzig@verrech.net}
		} 
	\and 
		Sebastian Wild%
		\footnote{%
			David R.\ Cheriton School of Computer Science, 
			University of Waterloo;
			\texttt{wild@waterloo.ca}
		}
	}
\begin{document}

\maketitle

\begin{abstract}\noindent
	Given a set of $n$ sticks of various (not necessarily different) lengths,
	what is the largest length so that we can cut $k$ equally long pieces of this length 
	from the given set of sticks?
	%
	We analyze the structure of this problem and show that it essentially reduces
	to a single call of a selection algorithm;
	we thus obtain an optimal linear"=time algorithm.

	This algorithm also solves the related envy-free stick-division problem,
	which \textcite{Segal-Halevi2016} recently used as their central primitive operation for the
	first discrete and bounded envy-free cake cutting protocol with a proportionality
	guarantee when pieces can be put to waste.
	\ifalgo[{%
		\par\smallskip\noindent
		\textbf{Keywords:}\\
			envy-free stick division, envy-free allocations, fair division, 
			building fences, stick cutting, cake cutting with waste, 
			proportional apportionment
	}]{%
		\keywords{%
			envy-free stick division, envy-free allocations, fair division, 
			building fences, stick cutting, cake cutting with waste, 
			proportional apportionment
		}
	}%
\end{abstract}

\section{Introduction}

This article originates from an apparently innocuous problem posed to the 
online question"=and"=answer network \href{https://cs.stackexchange.com}{\textsl{Computer Science Stack Exchange}} 
in September 2014:

\begin{indented}\slshape
	\mbox{\llap{``}You} have $n$ sticks of arbitrary lengths, not necessarily integral.
	By cutting some sticks (one cut cuts one stick, but we can cut as often as we want), 
	you want to get $k<n$ sticks such that:
	\begin{itemize}
		\item All these $k$ sticks have the same length;
		\item All $k$ sticks are at least as long as all other sticks.
	\end{itemize}
	Note that we obtain $n+C$ sticks after performing $C$ cuts.
	
	What algorithm would you use such that the number of necessary cuts is minimal? 
	And what is that number?''
	\upshape\hfill\cite{csse-question}
\end{indented}

Erel Segal-Halevi posed the question because he and his coauthors Avinatan Hassidim and Yonatan Aumann 
used this very procedure as their basic primitive to devise the first discrete and bounded 
envy-free cake cutting protocol for any number of agents, when it is acceptable to leave some
pieces of the cake unassigned (these pieces go to waste)~\cite{mainrefConf,Segal-Halevi2016}.
Their work constituted a significant progress on a long-standing open problem;
and a key technical lemma in their work uses the algorithms devised in this paper.
We give some background on cake cutting and some details about the protocol of 
\citeauthor{Segal-Halevi2016} in \wref{sec:cake-cutting-erel}.

The proposed problem~-- we will call it \optproblem~-- is quite elementary in nature and 
one expects to find an efficient solution using only basic data structures if properly combined~--
the feeling is that something along the lines of a binary search should do the job.
It seemed like an ideal creative exercise problem for students,
and indeed, the authors of this paper posed a simplified version of \optproblem 
as a bonus problem in a written exam for an intermediate-level algorithms course.
While preparing a detailed solution, though,
we found the problem surprisingly intriguing;
in particular, all answers given by the Stack Exchange community at that time 
were either far from optimal or
had significant gaps in the argumentation; a linear-time solution was not even speculated about.
In the end, the first impression turned out right~-- 
there is a simple and elementary algorithm that optimally solves \optproblem~-- 
but finding it was well beyond the scope of a typical exercise problem.
Somewhat unexpectedly, one can entirely avoid sorting the stick lengths and 
use a single call to a (rank) selection algorithm instead, leading to a linear-time algorithm.

The formulation of the problem above asks for the minimal number of cuts $C$, 
but \citeauthor*{Segal-Halevi2016}{} do not use $C$ itself in the end, 
but rather the \emph{length} of the longest sticks.
It is a trivial observation (see \wref{sec:problem-definition}) that the crux of the problem
is indeed finding 
\textsl{the maximal length \(\optlength\) so that \(k\) sticks of that length are 
obtainable by cuts \textit{at all};}
given~\optlength, we can easily determine how often and where to cut sticks to solve the
original \optproblem problem.
If we only ask for $k$ longest equal-length sticks, the second requirement,
that all other sticks must be (weakly) shorter, becomes immaterial.
This makes the problem even shorter to formulate, and quite practical:

\begin{indented}\slshape
	Assume you have a supply of $n$ sticks \textsmaller{(or planks; or poles; \dots)} of various lengths, 
	and you need $k$ equally long pieces \textsmaller{(as legs for a table; as posts for a fence; 
	as boards for a shelf; \dots)}.
	
	What is the maximal length of the pieces you can get from these (without~glue)?
\end{indented}

Despite its natural applications, 
this article is the first algorithmic treatment of the problem to the best of our knowledge.

Interestingly, the \optproblem problem was also part of a recent programming contest:
the ICI Open 2016 organized by 
the Norwegian University of Science and Technology in Trondheim.
Torbj{\o}rn Morland came up with the problem independently of the Stack Exchange question
[private communication] and he formulated it in yet another context
(as cutting equally high posts for a fence).%
\footnote{%
	It is an unlucky coincidence that \citeauthor{Morland2016} used
	the same parameter names $n$ and $k$ with exactly reversed roles \dots
}
The problem was called ``Building Fences'' in the contest and code can still be submitted 
(outside the original competition)
in the Kattis online system~\cite{Morland2016}.
We submitted a straight-forward C++ implementation of the algorithm devised in this paper,
and this submission immediately was the fastest of all 50-odd accepted submissions to date.
We take this anecdote as a sign that
(a) \optproblem is indeed a reasonably natural problem 
(natural enough to be rediscovered independently),
(b) our proposed algorithm is indeed efficient in practice (also for small input sizes),
and (c) that the algorithmic idea does not (yet) seem to be folklore,
making it well worthy of a proper discussion.

There is a less immediate connection to the problem of adequately
assigning seats in parliament to parties according to their relative share of votes after an election:
as we discuss in \wref{sec:relation-to-apportionment},
\optproblem can be formulated as such an apportionment problem, and likewise
the algorithm for cutting sticks we devise below
can be 
transferred and generalized to proportional apportionment.
We show in a companion paper~\cite{ReitzigWild2015} that
the resulting algorithm is indeed an improvement over state-of-the-art methods
for proportional apportionment with divisor sequences: 
it is the first algorithm with both a worst-case guarantee of linear running time,
and practical performance on par with the best heuristic algorithms currently in use.

We conclude this introduction with another metaphor for the problem considered in this paper.
We found this metaphor the most memorable one that includes the requirement of an envy-free
division of the resources; it is however not meant as a serious application.

\begin{indented}\slshape
Imagine you and your family move to a new house.
Naturally, each of your $k$ children wants to have a room of their own,
which is why you wisely opted for a large house with many rooms.
The sizes of the rooms are, however, not equal, and you anticipate 
that peace will not last long if any of the rascals finds out that
their room is smaller than any of the others'.

Removing walls is out of the question, but any room can 
(arbitrarily and iteratively) be
divided into smaller rooms by installing drywalls.
What is the minimal number of walls needed to obtain a configuration
that allows you to let your kids freely choose their rooms with guaranteed envy-freeness?
\end{indented}

We would like point out two differences to typical fair"=division scenarios: 
We assume (somewhat unrealistically for children) that only the size of
the rooms matters, so that two rooms of the same size are essentially indistinguishable.
Moreover, we make the (in our opinion sensible) assumption that all resulting rooms 
can in principle be chosen; this is in contrast to division scenarios with free disposal where 
agents do \emph{not} envy resources that were laid to waste.
Although we like the children's\/"=rooms metaphor
we prefer to remain consistent with existing descriptions of the problem and
will therefore continue talking about \emph{sticks} that are cut instead of rooms with installed walls.

In the remainder of this first section, we give an overview over related problems
and a roadmap of our contribution.
We then introduce notation and give a formal definition of \optproblem
in \wref{sec:problem-definition}. 
In \wref{sec:structure} we develop the means to limit our search
to finite sets of candidates.
We follow up by developing first a linearithmic, then a linear time algorithm
for solving \optproblem in \wref{sec:algorithms}.
Finally, we propose smaller candidate sets in \wref{sec:candidate-sets}.
A short conclusion in \wref{sec:conclusion} on the complexity of \optproblem
completes the article.

We append a glossary of the notation we use in \wref{app:notations} 
for reference, and some lower bounds on the number of distinct candidates
in \wref{app:additional-results}.

\subsection{Other Optimization Goals}
\label{sec:other-optimization-goals}

We briefly discuss of a few variants of stick cutting,
none of which changes the nature of the problem.
To reiterate, we consider the problem of dividing resources (fairly),
some of which may remain unallocated.
These are wasted, which of course is to be avoided if possible.
We can formulate this goal in different ways;
one can seek to
\begin{enumerate}[label=(G\arabic*),leftmargin=3.5em]
	\item \label{goal:min-cuts}
		minimize the number of necessary cuts (as in the original \optproblem),
	\item \label{goal:min-num-waste}
		minimize the number of waste pieces,
	\item \label{goal:min-amount-waste}
		minimize the total amount of waste 
		(here, the total length of all unallocated pieces), 
		or
	\item \label{goal:max-lstar}
		maximize the amount of resource each player gets
		(the length of the maximal pieces). 
\end{enumerate}
The first two objectives are discrete (counting things) whereas the 
latter two consider continuous quantities.

Obviously, \ref{goal:min-amount-waste} and \ref{goal:max-lstar} are dual 
formulations for equivalent problems; 
the total waste is always the (constant) total length of all sticks minus $k \cdot \optlength$.
Similarly, \ref{goal:min-cuts} is dual to \ref{goal:min-num-waste};
$c$ cuts divide $n$ sticks into $n+c$ pieces, and because exactly $k$ of these 
are non"=waste, the number of wasted pieces is $n+c - k$.

Recall that we require also the unassigned sticks to be cut so that they are no longer 
than the $k$ equal sticks.
This implies that the \emph{canonical division} induced by the largest feasible cut length~$l$~--  
cutting length-$l$ pieces off of any longer sticks until no such are left~-- 
is also optimal w.\,r.\,t.\ the number of cuts: 
a smaller cut length can only lead to more cuts.
This is one of the key insights towards algorithmic solutions of \optproblem 
so we state it formally in \wpref{cor:optismaxfeasible}.
In the terms of above goals, this means that optimal solutions for 
\ref{goal:min-amount-waste} and \ref{goal:max-lstar} are also optimal
for \ref{goal:min-cuts} and \ref{goal:min-num-waste}.

The converse is also true. Let \optlength be the cut length of an 
optimal canonical division w.\,r.\,t.\ the number of needed cuts 
(goal \ref{goal:min-cuts}).
This division must divide (at least) one stick perfectly, 
that is into only maximal pieces;
otherwise we could increase \optlength a tiny bit
until one stick is perfectly split, resulting in (at least) one cut less 
(as this stick does not produce a waste piece now).
But this means that we cannot \emph{increase} \optlength 
by any (positive) amount without immediately losing (at least)
one maximal piece; we formalize this fact in \wpref{lem:monotonicity+step}.
As the number of cuts grows with decreasing cut length, 
\optlength is the largest cut length that yields at least $k$ maximal pieces.
Together it follows that \optlength must be the
largest feasible cut length overall and thus induces an optimal 
solution for goal \ref{goal:max-lstar} (and therewith \ref{goal:min-amount-waste}), 
too. 

Therefore, all four objective functions we give above result in the same
optimization problem, and the same algorithmic solutions apply.
The reader may use any of the four formulations in case they do not agree
with our choice of \ref{goal:min-cuts}.

Note that the requirement that unassigned sticks must also be cut is essential for this equivalence.
Minimizing the number of cuts to obtaining any $k$ equal pieces (of arbitrary positive length)~--
or equivalently, producing an envy-free allocation when agents do \emph{not} envy wasted pieces~-- 
is indeed a \emph{different} problem that we do not consider here.%
\footnote{
	As one of our reviewers pointed out, the total fraction of waste is \emph{unbounded} 
	if we minimize cuts in the allocated part only: 
	consider the input $2,1,\ldots,1, M$, where $M$ is a huge number;
	it requires only one cut to assign pieces of length $1$, so an arbitrarily large fraction
	of the resource goes to waste.
	In contrast, when the cuts in wasted pieces also count, we never waste more than half of the 
	total resource (unless $n > k$ already initially).
	
	From an algorithmic perspective, the assumption that waste is not envied means that the algorithm
	must decide which sticks are deliberately put to waste and need hence not be cut at all.
	The number of performed cuts is then no longer a monotonic function of the cut length
	(as it is in our case, see \wref{sec:overview});
	rather it depends on how many sticks can be cut \emph{evenly} into maximal pieces.
	Our efficient solution by selection does hence not solve this problem.
}
\todo{
	Should we? It is also natural, right?
	
	We never need more than $k-1$ cuts for this;
	can only use less when we ``reuse'' existing cuts, i.\,e., we should maximize the number of
	different sticks used in the allocation that are cut evenly.
	This function is no longer monotonic, and certainly does not allow the select-trick.
	
	It requires a quite different approach.
}

\subsection{The Origins: Cake Cutting}
\label{sec:cake-cutting-erel}

\optproblem was motivated by a recent approach to \textsl{envy-free cake cutting.}
We briefly describe the context of the problem and how the stick division problem
is used for cake cutting.

The fair allocation of resources is a well-studied problem in economics.
See, e.\,g., \textcite{Brams1996} for a general treatise of the field and
\textcite{Brandt2016} for recent developments with a focus on computational aspects.
A vital feature of all fair division problems is that valuations are subjective:
different players may assign different values to the same objects,
i.\,e., the same piece of cake in the cake-cutting problem.

The cake-cutting problem has become the predominant mathematical metaphor for allocating
an infinitely divisible, inhomogeneous resource 
``fairly'' to a number of competing players or agents.
The standard model is to consider the unit interval as the cake (a very thin cake, alas).
An agent's value for a piece of cake does not only depend on the size of the piece, 
but also on its position within the cake (say, because the topping of the cake differs).
In general, assigning \emph{disconnected} pieces, i.\,e., a finite union of intervals, 
is acceptable, but sometimes \emph{contiguous} pieces are explicitly required.
Agents are not willing to share pieces, so all assigned pieces must be disjoint.
To exclude degenerate situations, valuations are assumed to be additive and 
absolutely continuous w.\,r.\,t.\ length.

What exactly constitutes a ``fair'' division is subject to debate and
several (partly contradicting) notions of fairness appear in the 
literature:
\begin{itemize}
\item 
	\emph{proportional division} (or \emph{simple fair division}) guarantees
	that each player gets a share that she values at least $1/k$;
\item 
	\emph{envy"=freeness} ensures that no player (strictly) prefers another 
	player's share over her own; 
\item 
	\emph{equitable division} requires that the (relative) value each player
	assigns to her own share is the same for all players, 
	(everyone feels the same amount of ``happiness'').
\end{itemize}
All three notions of fairness have been studied extensively for cake cutting.

Despite the maturity of the field, several ground-breaking results on envy"=free cake cutting
have only been found very recently;
in fact concurrently to the preparation of this article, \textcite{AzizMackenzie2016} published 
the first discrete and bounded protocol to produce an envy"=free allocation for any number of agents,
settling an open problem intensively studied at least since the 1960s.%
\footnote{%
	The authors thank the reviewers for pointing out this recent development.
}
Many solutions for variations and restrictions of this problem have been proposed,
and in view of the tremendous complexity of \citeauthor{AzizMackenzie2016}'s protocol~--
it does not greedily assign pieces to agents, but potentially reallocates them many times~--
those will probably remain relevant in practical applications.

\textcite{mainrefConf,Segal-Halevi2016} devise a much simpler protocol to find an envy"=free
division of a cake to $n$ agents when parts of the cake may remain unassigned.
Their core method (Algorithm~1 in~\cite{Segal-Halevi2016}) 
is a protocol that allocates \emph{contiguous} pieces to $n$ agents
with the guarantee that each agent values her piece at least $1/2^{n-1}$ of the overall cake.
This may seem little, but the generic protocol can be further improved for $n=3$ and $n=4$,
and if it is applied iteratively to non-assigned pieces, it yields an almost proportional
envy"=free division with disconnected pieces.

\Citeauthor*{Segal-Halevi2016} use the algorithm for solving \optproblem presented in this paper 
as a subroutine in their protocol,
namely for implementing the $\mathit{Equalize}(k)$ queries (Lemma~4.1~\cite{Segal-Halevi2016}). 
In short, their core method works as follows:
The players cut pieces from the cake, one after another, 
only allowing to subdivide already produced pieces.
After that, players each choose one of their favorite pieces among the existing pieces
in the \emph{opposite} cutting order, 
i.\,e, the player who cut first is last to choose her piece of cake.
During the cutting phase, agents produce a well-chosen number of pieces 
($2^{n-r-1}+1$ if the agent is the $r$th cutter)
of the cake that they regard to be all of equal value and
all other pieces are made at most that large;
this is exactly the setting of \optproblem.
The number of maximal pieces is chosen such that even after all players 
who precede the given player in ``choosing order'' have taken 
their favorite piece of cake, 
there is at least one of the current player's maximal pieces left for
her to pick, guaranteeing envy-freeness of the overall allocation.

\subsection{Stick Cutting As Fair Division Problem}

The setting of \optproblem shares some features of fair division problems,
but we would like to explicitly list several important differences here.

First, the core assumption for classic fair division problems is the
subjective theory of value, 
which in general forbids an objectively fair division of the resources.
In stick cutting, all players agree in their valuations, so that we can speak of \emph{the} length
of any given stick without loss of generality~-- 
or in the alternative metaphor: all the children value same rooms same: (linearly) by their size.

Second, among the above notions of fairness, proportional divisions do not usually exist 
for cutting sticks,
and equitability coincides with envy"=freeness when players agree in their valuations.
Envy-free assignments of sticks do also not exist in general
unless we allow leaving some pieces unallocated.
Note that we use the term envy"=free to imply that also these non-allocated pieces
have to be made unattractive by cutting them
(cf.\ the discussion in \wref{sec:other-optimization-goals}),
whereas in fair allocation scenarios agents do not envy waste.

Third, while each given stick is assumed to be continuously divisible,
existing cuts constrain the set of possible allocations,
so we neither have a purely divisible nor a purely indivisible allocation scenario.

In summary, we think that viewing \optproblem as restrictive special case of fair division
misses the problem's immediate own applications.

\subsection{Implications for Proportional Apportionment}
\label{sec:relation-to-apportionment}

It may be a surprising connection at first sight, 
but the stick cutting problem can be viewed as an apportionment problem in disguise.
Proportional apportionment is the problem
of assigning each party its ``fair'' share of a pool of \emph{indivisible}, 
unit-value resource items (seats in parliament), so
that the fraction of items assigned to a party resembles
as closely as possible
its (fractional, a priori known) \emph{value} 
(the number of votes for a party); 
these values are the input.%
\footnote{%
	We thank Chao Xu for bringing this problem to our attention.
}
\Textcite{Balinski2001} describe the problem extensively and 
illustrate many pitfalls with examples from the rich history of representation systems in the US.
\Textcite{Pukelsheim2014} adds more recent developments and the European perspective; 
he also takes a more algorithmic point of view on apportionment.

Most methods used in real-word election systems use sequential
assignment, allocating one seat at a time in a greedy fashion,
where the current priority of each party depends on its initial vote percentage
and the number of already assigned seats.
Different systems differ in the function for computing these updated priorities,
but all sensible ones are of the ``highest averages form'': 
In each round they assign the next seat to a party that (currently) maximizes 
$\nicefrac{v_i}{d_j}$ (with ties broken arbitrarily), 
where $v_i$ is the vote percentage of party $i$, 
$j$ is the number of seats party $i$ has already been assigned 
and $d_0, d_1, d_2,\ldots$ is an increasing \emph{divisor sequence} 
characterizing the method.%
\footnote{%
	$d_0 = 0$ is allowed in which case $\nicefrac{v_i}{d_0}$ is supposed to mean $v_i + M$ for
	a very large constant~$M$ (larger than the sum $\sum v_i$ of all values is sufficient). 
	This ensures that before any party is assigned a 
	second seat, all other parties must have one seat, 
	which is a natural requirement for some allocations scenarios, 
	e.\,g., the number of representatives for each state in a federal republic 
	might be chosen to resemble population counts, but any state, 
	regardless how small, should have at least one representative.
}

Even though the original description of the highest averages allocation
procedure is an iterative process, it is actually a static problem:
The averages $\nicefrac{v_i}{d_j}$ are strictly decreasing with $j$ for any $i$, so 
the sequence of maximal averages in the assignment rounds is also decreasing. 
In fact, if we allocate a seat to a party with current average $a$ in round $r$, 
then $a$ must have been the $r$th~largest element in the multiset of all 
ratios $\nicefrac{v_i}{d_j}$ for all parties $i$ and numbers $j$.
Moreover, if we know the value of the $k$th largest average $a^{*}$ up front,%
\footnote{%
	\Textcite{Pukelsheim2014} calls this value the divisor $D$, and thus refers 
	to highest-averages methods simply as divisor methods.
}
we can directly determine for each party $i$ how many seats it should receive: 
if $j$ is the largest number such that $\nicefrac{v_i}{d_j} \ge a^{*}$, 
then party $i$ receives $j+1$ seats.

The arguably most natural choice is hence $d_j = j+1$, yielding
the highest average method of Jefferson, a.\,k.\,a.\ the greatest divisors method.
For $d_j = j+1$, a party gets one seat for each time it can afford to pay the 
full price of a seat (namely $a^*$ votes), 
so it is assigned $\lfloor \nicefrac{v_i}{a^*}\rfloor$ seats.
The crux of the problem is thus to find a value $a^*$, such that this rule assigns exactly $k$
seats in total (or the smallest number no less than $k$ in case of ties).
Since $\lfloor \nicefrac{v_i}{a^*}\rfloor$ is also the number of maximal pieces that can be cut
from a stick of total length $L=v_i$ using cut length $l=a^*$,
we are actually asking for a maximal cut length $a^*$ so that we obtain $k$ equally long pieces
when trimming sticks of initial lengths $v_1,\ldots,v_n$ to length $a^*$.
Therefore \optproblem is essentially equivalent to an apportionment problem 
with divisor sequence $d_j=j+1$ and the stick lengths as vote tallies.
Note that for this equivalence, we have to \emph{reverse} the roles of agents and resources: 
\textsl{Assigning equally long stick pieces to players is equivalent to 
apportioning \textit{to the sticks} their fair share of players 
using Jefferson's highest averages method.}

We can consequently use any apportionment algorithm to solve \optproblem,
in particular the practically efficient methods proposed by \textcite{Pukelsheim2014}
or the (rather complicated) worst-case linear-time algorithm of \textcite{Cheng2014}.
However, it actually turns out more fruitful to \emph{reverse} the idea:
our algorithm for \optproblem is conceptually much simpler than the method of \citeauthor{Cheng2014}, 
but has the same time worst"=case linear"=time guarantee, 
so we can actually improve the state-of-the-art methods for apportionment.

The method for \optproblem per se can only deal with the divisor sequence $d_j=j+1$,
but we show in a companion article~\cite{ReitzigWild2015} how to generalize the underlying ideas
to all divisor sequences listed by
\textcite[Table\,1]{Cheng2014}, 
and indeed to any divisor sequence that \citeauthor{Cheng2014}'s algorithm can handle.
Although the techniques remain similar, the generalization required modifications to the formalism,
and it deemed us best to separate the detailed discussion of apportionment
from the stick cutting algorithms in this paper.

In the apportionment article~\cite{ReitzigWild2015}, 
we demonstrate in extensive running"=time experiments that our method is indeed 
much faster than \citeauthor{Cheng2014}'s algorithm, 
and has more predictable running time than the methods currently used in practice.
The latter do not have a linear"=time guarantee in the worst case, 
and we identify a class of inputs, where they indeed exhibit superlinear behavior.

It is somewhat surprising to us (in hindsight) that we did not find our method in the
quite extensive literature on proportional apportionment;
in any case, the detour through \optproblem has helped us in finding it a lot.

\subsection{Overview of this Article}
\label{sec:overview}

In this section, we give an informal description of the steps that lead to our 
solution for \optproblem (see also \wref{fig:schematic-overview});
formal definitions and proofs follow in the main part of the paper.

\begin{figure}
	\plaincenter{
	  \begin{tikzpicture}[auto,Step/.style={draw,rectangle,minimum width=5cm,
	                                        inner sep=5pt},
	                           Major/.style={semithick,fill=gray!10},
	                           Insight/.style={outer xsep=5pt},
	                           rounded corners,
	                           scale=0.9,transform shape]
	    \node[Step,Major]      (s1)  {\optproblem};
	    \node[Step,below=1.5 of s1] (s2)  {one"=dimensional problem};
	    \node[Step,below=1.5 of s2] (s3)  {search problem};
	    \node[Step,below=1.5 of s3] (s4)  {discrete search problem};
	    \node[Step,below=1.5 of s4] (s5)  {finite search problem};
	    \node[Step,below left =1.5 and -1 of s5] (s6a) {linearithmic search space};
	    \node[Step,Major,below right=1.5 and -1 of s5] (s6b) {linear search space};
	  
	    \path[-stealth']
	      (s1) edge node[Insight] {only canonical divisions} (s2)
	      (s2) edge node[Insight] {monotonic objective} (s3)
	      (s3) edge node[Insight] {piecewise constant constraint} (s4)
	      (s4) edge node[Insight] {trivial lower bound on \optlength} (s5)
	      (s5) edge[bend right=10] 
	            node[Insight,swap] {combinatorics} 
	           (s6a)
	      (s5) edge[bend left=10]  
	             node[Insight] {sandwich bounds} 
	           (s6b);
	  \end{tikzpicture}
	}
	\caption{%
		Schematic overview of the refinement steps that turn
		a seemingly hard problem into a tame task amenable to elementary 
		yet efficient algorithmic solutions.
}
	\label{fig:schematic-overview}
\end{figure}

Without further restrictions, 
\optproblem is a non"=linear continuous optimization problem
that does not seem to fall into any of the usual categories of problems
that are easy to solve.
Any stick might be cut an arbitrary number of times at arbitrary lengths,
so the space of possible divisions is huge.

The first step to tame the problem is to observe that most of these divisions
cannot be optimal:
Assuming we already know the size \optlength of the $k$ maximal pieces in 
an optimal division (the size of the rooms assigned to the kids), 
we can recover a \textsl{canonical} optimal division by simply cutting
\optlength{}"/sized pieces off of any stick longer than \optlength 
until all sticks have length at most \optlength.
Cutting a shorter piece only creates waste, cutting a larger piece 
always entails a second cut for that piece.
We can thus identify a (candidate) cut length with its 
corresponding canonical division and so
\optproblem reduces to finding the optimal cut length \optlength.

The second major simplification comes from the observation that
for canonical divisions, the number of cuttings can only get larger
when we decrease the cut length.
(We cut each sticks into shorter pieces, this can only mean more cuts.)
Stated differently, the objective function that we try to minimize is
\emph{monotonic} (in the cut length).
This is a very fortunate situation since it allows a simple characterization
of optimal solutions:
\optlength~is the \emph{largest} length, whose canonical division still contains
(at least) $k$ maximal pieces of equal size,
transforming our optimization to a mere search problem for the point
where cut lengths transition from feasible to infeasible solutions.

By similar arguments, also the number of equal sized maximal pieces
(in the canonical division) for a cut length $l$ does only increase when 
$l$ is made smaller,
so we can use \textsl{binary search} to find the length \optlength 
where the number of maximal pieces first exceeds $k$.
The search is still over a continuous region, though.

Next we note that both the objective and the feasibility function 
are piecewise constant with jumps only at lengths of the form $\nicefrac{L_i}j$, 
where $L_i$ is the length of an input stick and $j$ is a natural number.
Any (canonical) division for length $l$ that does not cut any 
stick evenly into pieces of length $l$
remains of same quality and cost if we change the $l$ a very little.
Moreover, any such division can obviously be improved by increasing 
the cut length, until we cut one stick $L_i$ evenly, say into $j$ pieces,
as we then get the same number of maximal pieces with (at least)
one less cutting.
We can thus restrict our (binary) search for \optlength to these jump points,
making the problem discrete~-- but still infinite, as we do not yet have an upper
bound on $j$.

We can, however, easily find lower bounds on \optlength~--
or, equivalently, upper bounds on~$j$~-- that render the search space finite.
For example, we obviously never need to cut more than $k$ pieces
out of any single stick, in particular not the largest one.
This trivial observation already reduces the search space to $\Oh(n^2)$ candidates,
where $n$ is the number of sticks in the input.

We will then show how to obtain even smaller candidate sets by developing slightly 
cleverer upper and lower bounds for the number of maximal pieces
(in the canonical divisions) for cut length $l$.
The intuitive idea is as follows. 
If we had a single stick with the total length of all sticks,
dividing it into $k$ equal pieces would give us the ultimately
efficient division without any waste.
The corresponding ``ultimate cut length'' is of course easy to compute,
but with pre"=cut sticks, it will usually not be feasible.

However, we know how much the precut sticks can possibly cost us relative to the ultimate division:
each input stick contributes (at most) one piece of waste.
Now imagine the sticks arranged in line, 
so that they form a single long stick with some existing fractures.
When cutting this stick evenly into $n+k$ (not only $k$) pieces, the existing cuts lie in at most
$n$ of these pieces, leaving $k$ segments intact.
Therefore the total length divided by $n+k$ is always a feasible cut length.
With a little diligence (see \wref{sec:linear-candidate-set}), 
we can show that the number of jumps between these ``sandwich bounds'' for \optlength, 
i.\,e., the number of cut lengths to check, remains linear in $n$.
For $k \leq n$, we get an $\Oh(k)$ bound by first removing sticks shorter than the $k$th~largest one.

\smallskip

The discussion above takes the point of view of mathematical optimization,
describing how to reduce the number of candidate cut lengths we have to check;
we are still one step away from turning this into an actual, executable algorithm.
After reducing the problem to a finite search problem,
binary search naturally comes to mind;
we work out the details in \wref{sec:algorithms}.
However, \emph{sorting} the candidate set and \emph{checking feasibility} of 
candidates dominate the runtime of this binary"=search"=based algorithm~-- 
this is unsatisfactory.

As hinted at above,
it is possible to determine \optlength more directly, namely as a specific order 
statistic of the candidate set.
From the point of view of objective and feasibility functions, this trick works
because both functions essentially \emph{count} the number of unit jumps (i.\,e.\ 
occurrences of $\nicefrac{L_i}{j}$) at points larger than the given length.
This approach yields a simple linear"=time algorithm based on a single rank selection; 
we describe it in detail in \wref{sec:selection-algorithm}.

\section{Problem Definition}
\label{sec:problem-definition}

We will mostly use the usual zoo of mathematical notation (as used in theoretical
computer science, that is); see \wref{app:notations} for a comprehensive list.
Since they are less often used, let us quickly introduce
notation for \emph{multisets}, though. For some set $X$, denote a multiset
over $X$ by
  \[ \mset{A} \wwrel= \{ x_1, x_2, \dots \} \]
with $x_i \in X$ for all $i \in [1..|\mset{A}|]$. Note the bold letter; we will
use these for multisets, and regular letters for sets. Furthermore, denote
the \emph{multiplicity} of some $x \in X$ in $\mset{A}$ as $\mset{A}(x)$;
in particular,
  \[ |\mset{A}| \wwrel= \sum_{x \,\in\, X} \mset{A}(x). \]
When we use a multiset as operator range, we want to consider every occurrence 
of $x \in \mset{A}$; for example,
  \[  \sum_{x \,\in\, \mset{A}} f(x) 
    \wwrel= \sum_{i \in [1..|\mset{A}|]} f(x_i)
    \wwrel= \sum_{x \,\in\, X} \mset{A}(x) \cdot f(x). \]
  
As for multiset operations, we use \emph{multiset union} $\uplus$ that adds up 
cardinalities; that is, if $\mset{C} = \mset{A} \uplus \mset{B}$ then 
$\mset{C}(x) = \mset{A}(x) + \mset{B}(x)$ for all $x \in X$.
\emph{Multiset difference} works in the reverse way; if 
$\mset{C} = \mset{A} \setminus \mset{B}$ then 
$\mset{C}(x) = \max \{0, \mset{A}(x) - \mset{B}(x) \}$ for all $x \in X$.

Intersection with a set $B \subseteq X$ is to be read as natural extension
for the usual set intersection; that is, if $\mset{C} = \mset{A} \cap B$ then
$\mset{C}(x) = \mset{A}(x) \cdot B(x)$ for all $x \in X$ 
(we also use the multiplicity notation for ordinary sets, where $B(x) \in \{0,1\}$).

Now for problem"=specific notation.
We call any length $L \in \Q$ a \emph{stick}.%
\footnote{%
	We restrict the input lengths to rational numbers to simplify the presentation;
	all algorithms would work the same in a model that works with exact real numbers,
	and they are numerically stable when using floating-point numbers.
	Our analyses count the number of arithmetic operations and comparisons,
	and apply to any such model.
}
\emph{Cutting} $L$ with length 
$0<l<L$ creates two pieces with lengths $l$ and $L - l$ respectively. 
By iteratively cutting sticks and pieces thereof into smaller pieces,
we can transform a set of sticks into a set of pieces.

We define the following trivial problem for fixing notation.
\begin{problem} \textbf{\cutproblemascii}
  \pdfbookmark[2]{\cutproblemascii}{prob:cutproblem}%
  \label{prob:cutproblem}%
  \begin{indented}
    \textbf{Input:} Multiset $\mset L = \{L_1, \dots, L_n\}$ of sticks with 
      lengths $L_i \in \Q_{> 0}$,
      target number $k \in \N_{>0}$ and cut length $l \in \Q_{> 0}$.
    
    \textbf{Output:} The (minimal) number of cuts necessary for
      cutting the input sticks into sticks $L'_1, \dots, L'_{n'} \in \Q_{> 0}$
      so that
      \begin{enumerate}[label=\roman*)]
        \item \label{item:k-sticks}%
          (at least) $k$ pieces have length $l$, i.\,e.\ $|\{ i \mid L'_i = l \}| \geq k$,
        \item \label{item:all-smaller}%
          and no piece is longer than $l$, i.\,e.\ $L'_i \le l$ for all $i$.
      \end{enumerate}
  \end{indented}
\end{problem}
The solution is immediate; we state it below to demonstrate the notation introduced in the following.

We denote by $\lpieces(L,l)$ the number of stick pieces of length $l$~-- we will 
also call these \emph{maximal} pieces~-- 
we can get when we cut stick $L$ into pieces no longer than $l$.
This is to mean that you first cut $L$ into two pieces, then possibly 
further cut those pieces and so on, until all pieces have length at most $l$.
Obviously, the best thing to do is to only ever cut with length $l$.
We thus have
  \[ \lpieces(L, l) \wwrel= \Biggl\lfloor \frac{L}{l} \Biggr\rfloor . \]
Because we may also produce one shorter piece, the total number of pieces we 
obtain by this process is given by
  \[ \allpieces(L, l) \wwrel= \Biggl\lceil \frac{L}{l} \Biggr\rceil , \]
and 
  \[ \cuts(L, l) \wwrel= \Biggl\lceil \frac{L}{l} \Biggr\rceil - 1  \]
denotes the number of cuts we perform.

We extend this notation to multisets of sticks, that is
\begin{align*}
  \Lpieces(\mset L, l) 
    &\wwrel\ce 
	    \sum_{L \,\in\, \mset{L}} \lpieces(L,l)
    \wwrel=
	    \sum_{L \,\in\, \mset{L}} \Biggl\lfloor \frac{L}{l} \Biggr\rfloor\qquad\text{and} \\[1ex]
  \Cuts(\mset L, l) 
	  &\wwrel\ce 
		  \sum_{L \,\in\, \mset{L}} \Cuts(L_i,l)
	  \wwrel=
		  \sum_{L \,\in\, \mset{L}} \Biggl\lceil \frac{L}{l} - 1 \Biggr\rceil.
\end{align*}
See \wref{fig:visual-notation} for a small example.
\begin{figure}
  \begin{center}
    \begin{tikzpicture}[x=7mm,y=5mm,auto,scale=1.2,transform shape]
      \fill[black!10]
        \foreach \y in {1,2,3,4} {%
          \foreach \x in {1,2.5,4,5.5} {%
             (\x,\y) rectangle +(1,1)
          }
        }
        \foreach \x/\y in {1/5, 7/1, 7/2, 7/3} {%
          (\x, \y) rectangle +(1,1)
        };        
%
      \draw[dashed,black!85]
        \foreach \y/\w in {1/8,2/8,3/8,4/5.5,5/2} {%
           (1,\y+1) -- (\w,\y+1)
        };
%
      \foreach \i/\x/\h in {1/1/5.5, 2/2.5/4.8, 3/4/4.5, 4/5.5/4, 5/7/3.5} {%
        \draw (\x,1) rectangle +(1,\h);
        \node[scale=0.85] at (\x + 0.5,0) {$L_\i$};
      }
%
      \draw[decoration={brace},decorate]
        (0.75,1) -- +(0,1) 
          node[midway,scale=0.85,outer xsep=1mm] {$l$};
    \end{tikzpicture}
  \end{center}
  \caption{%
  	Sticks $\mset L = \{L_1, \dots, L_5\}$ cut with some length $l$.
    Note how $\Lpieces(\mset L, l) = 20$ and $\Cuts(\mset L, l)$ = 19.
    There are four non-maximal pieces.%
   }
  \label{fig:visual-notation}
\end{figure}

Using this notation, the conditions of \cutproblem translate into checking 
whether $\Lpieces(\mset L, l) \ge k$ for cut length $l$; we call such 
$l$ \emph{feasible} cut lengths (for $\mset L$ and $k$).
We define the following predicate as a shorthand:
\[
		\Feasible(\mset L, k, l) 
	\wwrel\ce
		\begin{cases}
			1, & \Lpieces(\mset L, l) \geq k; \\
			0, & \text{otherwise}.
		\end{cases} 
\]
Now we can give a concise algorithm for solving \cutproblem.

\begin{algorithm}{$\canonicalcutalg(\mset L, k, l) :$}%
  \pdfbookmark[2]{Algorithm \canonicalcutalg}{alg:cuttingalg}%
  \label{alg:cuttingalg}
  \begin{enumerate}
    \item If $\Feasible(\mset L, k, l)$:
      \begin{enumerate}[label=\theenumi.\arabic*.]
        \item Answer $\Cuts(\mset L, l)$.
      \end{enumerate}
    \item Otherwise:
      \begin{enumerate}[label=\theenumi.\arabic*.]
        \item Answer $\infty$ (i.\,e.\ ``not possible'').        
      \end{enumerate}
  \end{enumerate}
\end{algorithm}

Assuming the unit"=cost RAM model~-- which we will do in this article~-- 
the runtime of \canonicalcutalg is clearly in $\Oh(n)$; 
evaluation of \Feasible and \Cuts in time $\Oh(n)$ each dominates.
We will see later that a better bound is $\Th(\min(k,n))$ 
(cf.\ \wref{lem:feasible-computation}).

Of course, different cut lengths $l$ cause different numbers of cuts.
We want to find an \emph{optimal} cut length, that is a length $\optlength$ 
which \emph{minimizes} the number of cuts necessary to fulfill conditions~\ref{item:k-sticks} 
and~\ref{item:all-smaller} of \cutproblem. We formalize this as follows.

\begin{problem} \textbf{\optproblemascii}
  \pdfbookmark[2]{\optproblemascii}{prob:optproblem}%
  \label{prob:optproblem}%
  \begin{indented}%
    \textbf{Input:} Multiset $\mset L = \{L_1, \dots, L_n\}$ of sticks with 
      lengths $L_i \in \Q_{> 0}$
      and target number $k \in \N_{>0}$.

    \textbf{Output:} A (feasible) cut length $\optlength \in \Q_{>0}$ which 
      minimizes the result of \cutproblem for $\mset L$, $k$ and $\optlength$.
  \end{indented}
\end{problem}

We observe that the problem is not as easy as picking the smallest $L_i$, cutting
the longest stick into $k$ pieces, or using the $k$th~longest stick (if $k \leq n$).
Consider the following, admittedly artificial example which debunks such simplistic attempts.

\begin{example}\label{ex:nonaivealg}
  Let 
    \[ \mset L = \{mx, (m-1)x + 1, (m-2) \cdot x + 2, 
                    \dots, 
                  \nicefrac{m}{2} \cdot x + \nicefrac{m}{2},
                  x-1, x-1, 
                    \dots \} \; \]
  for a total of 
    $n = m^2$ 
  elements and 
    $k = \nicefrac{3}{8} \cdot m^2 + \nicefrac{3}{4} \cdot m$,
  where $m \in 4\N_{>0}$ and $x > \nicefrac{m}{2}$.
  
  Note that $\optlength = x$, that is in particular
  \begin{itemize}
    \item $\optlength \neq L_i$ and
    \item $\optlength \neq \nicefrac{L_i}{k}$
  \end{itemize}
  for all $i \in [1..n]$. In fact, by controlling $x$ we get an (all but) arbitrary
  fraction of an $L_i$ for \optlength. It is possible to extend the example so
  that ``$mx$''~-- the stick whose fraction is optimal~-- has (almost) 
  arbitrary index $i$, too.
  
  As running example we will use $(\mset{L}_{\mathrm{ex}}, k)$ as defined by $m=4$ and $x=2$,
  that is
  \begin{itemize}
    \item $\mset{L}_{\mathrm{ex}} = \{8, 7, 6, 1, 1, 1, 1, 1, 1, 1, 1, 1, 1, 1, 1, 1\}$ and 
    \item $k = 9$.
  \end{itemize}
  Note that $\optlength = 2$ and $\Lpieces(\mset{L}_{\mathrm{ex}}, \optlength) = 10 > k$ here.
  See \wref{fig:monotonicity+step-example} for a plot of $\Lpieces(\mset{L}_{\mathrm{ex}}, l)$.
\end{example}

\section{Exploiting Structure}
\label{sec:structure}

For ease of notation, we will from now on assume arbitrary but fixed input
$(\mset L, k)$ be given implicitly. In particular, we will use $\Lpieces(l)$
as short form of $\Lpieces(\mset L, l)$, and similar for~\Cuts and \Feasible.

At first, we observe that both constraint and objective function of \optproblem
belong to a specific, simple class of functions.
\begin{figure}
	\pgfplotstableread{
	l	1/l	m(l)	c(l)	
	1.	1.	34.	18.	
	1.14286	0.875	18.	17.	
	1.16667	0.857143	17.	16.	
	1.2	0.833333	16.	15.	
	1.33333	0.75	15.	14.	
	1.4	0.714286	14.	13.	
	1.5	0.666667	13.	12.	
	1.6	0.625	12.	11.	
	1.75	0.571429	11.	10.	
	2.	0.5	10.	8.	
	2.33333	0.428571	8.	7.	
	2.66667	0.375	7.	6.	
	3.	0.333333	6.	5.	
	3.5	0.285714	5.	4.	
	4.	0.25	4.	3.	
	6.	0.166667	3.	2.	
	7.	0.142857	2.	1.	
	8.	0.125	1.	0.	
}\plotpointsLpiecesMfourXtwo
\pgfplotstableread{
	l	1/l	m(l)	c(l)	
	1.	1.	56.	40.	
	1.09091	0.916667	40.	39.	
	1.11111	0.9	39.	38.	
	1.14286	0.875	38.	37.	
	1.2	0.833333	37.	36.	
	1.25	0.8	36.	35.	
	1.33333	0.75	35.	33.	
	1.42857	0.7	33.	32.	
	1.5	0.666667	32.	31.	
	1.6	0.625	31.	30.	
	1.66667	0.6	30.	29.	
	1.71429	0.583333	29.	28.	
	2.	0.5	28.	12.	
	2.4	0.416667	12.	11.	
	2.5	0.4	11.	10.	
	2.66667	0.375	10.	9.	
	3.	0.333333	9.	8.	
	3.33333	0.3	8.	7.	
	4.	0.25	7.	5.	
	5.	0.2	5.	4.	
	6.	0.166667	4.	3.	
	8.	0.125	3.	2.	
	10.	0.1	2.	1.	
	12.	0.0833333	1.	0.	
}\plotpointsLpiecesMfourXthree
\plaincenter{%
\begin{tikzpicture}[
	every node/.style={font={\footnotesize}}
]
\begin{axis}[%
	xmin=1.3,
	xmax=9,
	xlabel={$l$},
	ylabel=$\Lpieces(l)$,
]
	\addplot [
	    thick,
	    jump mark right,
	    mark=*,
	] table [x=l,y=m(l)] {\plotpointsLpiecesMfourXtwo};
	\addplot[only marks,mark=*,mark options={fill=white},opacity=0] 
		 table [x=l,y=c(l)] {\plotpointsLpiecesMfourXtwo};
	
	\addplot[red] coordinates { (0,9) (100,9) };
	\node[above,red] at (axis cs: 6,9) {$k=9$};
	
	\draw[thin, densely dotted] (axis cs:2,10) -- (axis cs:2,-1) ;
	\node[right] at (axis cs:2,1) {$\optlength=2$} ;
	
\end{axis}%
\end{tikzpicture}%
%
%
%
}
	\caption{%
		The number of maximal pieces $\Lpieces(\mset{L}_{\mathrm{ex}}, l)$ in 
		cut length~$l$ for $\mset{L}_{\mathrm{ex}}$ as defined in \wref{ex:nonaivealg}. 
    The filled circles indicate the value of $\Lpieces(\mset{L}_{\mathrm{ex}}, l)$ 
    at the jump discontinuities.
	}
	\label{fig:monotonicity+step-example}
\end{figure}

\begin{lemma}\label{lem:monotonicity+step}
  Functions \Lpieces and \Cuts are non"=increasing, piecewise"=constant functions 
  in $l$ with jump discontinuities of (only) the form $\nicefrac{L_i}{j}$ for 
  $i \in [1..n]$ and $j \in \N_{>0}$.
  
  Furthermore, \Lpieces is left- and \Cuts is right"=continuous.
\end{lemma}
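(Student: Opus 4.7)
The plan is to prove the statement summand-by-summand: by linearity and finite-sum preservation of the stated properties, it suffices to analyze the building blocks $\lpieces(L_i,l) = \lfloor L_i/l\rfloor$ and $\cuts(L_i,l) = \lceil L_i/l\rceil - 1$ for a single stick~$L_i$, since $\Lpieces$ and $\Cuts$ are simply sums of these over $i\in[1..n]$.

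Fix~$L_i$. The map $l \mapsto L_i/l$ is continuous and strictly decreasing on $\Q_{>0}$. Composing with $\lfloor\cdot\rfloor$ (resp.~$\lceil\cdot\rceil$), which are non-decreasing step functions on $\R$, yields non-increasing step functions of~$l$. The step functions $\lfloor\cdot\rfloor$ and $\lceil\cdot\rceil$ are constant between consecutive integers, so $\lpieces(L_i,l)$ and $\cuts(L_i,l)$ can only change value where $L_i/l \in \N_{>0}$, i.\,e., at $l = L_i/j$ with $j\in\N_{>0}$. Forming the finite sums over $i$, the functions $\Lpieces$ and $\Cuts$ are non-increasing and piecewise"=constant, with possible jumps only at points in $\{L_i/j : i\in[1..n],\,j\in\N_{>0}\}$, as claimed.

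It remains to determine the side of continuity at a jump point $l_0 = L_i/j$. Since $l \mapsto L_i/l$ is strictly decreasing, approaching $l_0$ from the left in $l$ corresponds to approaching $j = L_i/l_0$ from above in $x = L_i/l$. Now $\lfloor\cdot\rfloor$ is right-continuous at integers ($\lim_{x\to j^+}\lfloor x\rfloor = j = \lfloor j\rfloor$, whereas $\lim_{x\to j^-}\lfloor x\rfloor = j-1$), so $\lim_{l\to l_0^-}\lpieces(L_i,l) = j = \lpieces(L_i,l_0)$, giving left-continuity of $\lpieces(L_i,\cdot)$. Dually, $\lceil\cdot\rceil$ is left-continuous at integers, hence $\cuts(L_i,\cdot)$ is right-continuous at $l_0$. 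These side-continuity properties are preserved under finite sums (the limit of a finite sum equals the sum of limits), so $\Lpieces$ is left-continuous and $\Cuts$ is right-continuous.

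The only real subtlety, and the step I would be most careful with, is matching directions correctly: because the composition with $l\mapsto L_i/l$ reverses the direction of approach, the right-continuity of $\lfloor\cdot\rfloor$ in its argument translates into left-continuity of $\lpieces(L_i,\cdot)$ in~$l$, and symmetrically for $\lceil\cdot\rceil$. Once this sign-flip is made explicit, the remainder of the argument is just bookkeeping.
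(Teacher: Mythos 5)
Your proof is correct and follows essentially the same route as the paper's: decompose $\Lpieces$ and $\Cuts$ into their per-stick summands, observe each is a non-increasing step function of $l$ with jumps only at $\nicefrac{L_i}{j}$, and derive the one-sided continuity from that of $\lfloor\cdot\rfloor$ resp.\ $\lceil\cdot\rceil$ with the direction reversed by the composition with $l \mapsto \nicefrac{L_i}{l}$. You simply make the sign-flip argument more explicit than the paper does.
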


\begin{proof}
  The functions are given as finite sums of terms that are either of the form
  $\lfloor \frac{L}{l} \rfloor$ or 
  $\lceil \frac{L}{l} - 1 \rceil$.
  Hence, all summands are piecewise constant and never increase with growing~$l$.
  Thus, the sum is also a non"=increasing piecewise"=constant function.
  
  The form $\nicefrac{L_i}{j}$ of the jump discontinuities is apparent for each 
  summand individually, and they carry over to the sums by monotonicity.
  
  The missing continuity properties of \Lpieces resp.\ \Cuts follow from
  right"=continuity of $\lfloor \cdot \rfloor$ resp.\ left"=continuity of 
  $\lceil \cdot \rceil$; the direction gets turned around because we consider
  $l^{-1}$ but other than that arithmetic operations maintain continuity.
\end{proof}

See \wref{fig:monotonicity+step-example} for an illustrating plot.

Knowing this, we immediately get lots of structure in our solution space which
we will utilize thoroughly.

\begin{corollary}\label{cor:optismaxfeasible} 
  $ \optlength = \max \{ l \in \Q_{>0} \mid \Feasible(l) \}. $
\qed\end{corollary}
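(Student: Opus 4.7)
The plan is to deduce the claim directly from \wref{lem:monotonicity+step}. Feasibility is governed by the non-increasing function $\Lpieces$, while the objective $\Cuts(l)$ is itself non-increasing in $l$. So the minimization of $\Cuts$ over the feasible set is really the maximization of $l$ over that same set, and both extrema coincide. The main content of the proof is therefore just checking that the supremum of feasible lengths is attained.

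First, I would argue that the set $F \ce \{l \in \Q_{>0} : \Feasible(l)\}$ has a maximum. Non-emptiness is immediate: for small enough rational $l$, $\Lpieces(l) = \sum_i \lfloor L_i/l \rfloor \ge k$. Upper-boundedness is also clear: for $l > \max_i L_i$, every summand vanishes and $\Lpieces(l) = 0 < k$. By \wref{lem:monotonicity+step}, $\Lpieces$ is non-increasing, so $F$ is downward-closed, and its jump discontinuities all lie at rational points of the form $L_i/j$; hence the supremum $l^{\star} \ce \sup F$ is itself of this form and in particular rational. Left-continuity of $\Lpieces$ yields $\Lpieces(l^{\star}) = \lim_{l\uparrow l^{\star}} \Lpieces(l) \ge k$ (since all $l < l^{\star}$ sufficiently close to $l^{\star}$ lie in $F$), so $l^{\star} \in F$.

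Second, I would invoke the monotonicity of $\Cuts$. For every $l \in F$ we have $l \le l^{\star}$, and since $\Cuts$ is non-increasing, $\Cuts(l) \ge \Cuts(l^{\star})$. Thus $l^{\star}$ minimizes $\Cuts$ over feasible cut lengths, which is exactly the definition of \optlength. Together, $\optlength = l^{\star} = \max F$, as claimed.

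There is no real obstacle here; the corollary is essentially a repackaging of \wref{lem:monotonicity+step} to expose the characterization of \optlength as an extremal feasible length. The only point requiring a moment's care is verifying that the supremum is actually attained, which is why we need the left-continuity of $\Lpieces$ (rather than mere monotonicity) and the fact that its jumps occur only at the rational points $L_i/j$.
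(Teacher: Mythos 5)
Your proof is correct and follows exactly the route the paper intends: the corollary is stated without proof as an immediate consequence of \wref{lem:monotonicity+step} (monotonicity of $\Cuts$ makes the largest feasible length optimal), with the paper's accompanying remark that the maximum exists by left-continuity of $\Feasible$ --- precisely the attainment argument you spell out. You merely make explicit the details (non-emptiness, boundedness, and that the supremum sits at a jump point $\nicefrac{L_i}{j}$) that the paper leaves to the reader.
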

Note in particular that the maximum exists because \Feasible is left"=continuous.

This already tells us that any feasible length gives a lower bound on \optlength.
One particular simple case is $k < n$ since then the $k$th~largest stick is always 
feasible. This allows us to get rid of all properly smaller input sticks, too, 
since they are certainly waste when cutting with any optimal length. As a consequence, 
having \emph{any} non"=trivial lower bound on \optlength already speeds up our 
search by ways of speeding up feasibility checks.
\begin{lemma}\label{lem:feasible-computation}
  Let $L \in \Q_{\geq 0}$ fixed and denote with
    \[     I_{> L} 
       \wwrel\ce \{ i \in [1..n] \mid L_i > L \} \]
  the (index) set of all sticks in $\mset L$ that are longer than $L$. 
  Then,
    \[ \Lpieces(l) \wwrel= \sum_{i \,\in\, I_{> L}} m(L_i, l) \]
  for all $l > L$.
\end{lemma}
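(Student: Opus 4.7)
The plan is to use the definition of $\Lpieces$ as a sum over all sticks and observe that sticks of length at most $L$ contribute zero whenever $l > L$.

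First I would expand the left-hand side by the definition of $\Lpieces$ introduced above,
\[
    \Lpieces(l) \wrel= \sum_{i \in [1..n]} m(L_i, l) \wrel= \sum_{i \in [1..n]} \Biggl\lfloor \frac{L_i}{l} \Biggr\rfloor,
\]
and partition the index set $[1..n]$ into $I_{>L}$ and its complement $I_{\leq L} = \{ i \in [1..n] \mid L_i \leq L\}$.

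The key step is to argue that the contribution of $I_{\leq L}$ vanishes. For any $i \in I_{\leq L}$ we have $L_i \leq L < l$, hence $0 \leq L_i/l < 1$, so $\lfloor L_i/l \rfloor = 0$ and therefore $m(L_i, l) = 0$. Plugging this back yields
\[
    \Lpieces(l) \wrel= \sum_{i \in I_{>L}} m(L_i, l) \wrel+ \sum_{i \in I_{\leq L}} 0 \wrel= \sum_{i \in I_{>L}} m(L_i, l),
\]
which is the claimed identity.

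There is no real obstacle here; the statement is essentially a bookkeeping observation that pieces shorter than the cut length contain no maximal piece. The only thing to be careful about is the strictness of the inequality $l > L$ (as opposed to $l \geq L$), which is needed to guarantee $L_i/l < 1$ rather than $L_i/l \leq 1$; a non-strict version would wrongly include the boundary case $L_i = L = l$ where $m(L_i, l) = 1$.
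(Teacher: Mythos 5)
Your proof is correct and follows exactly the same route as the paper's: expand $\Lpieces(l)$ as the sum of $\lfloor \nicefrac{L_i}{l} \rfloor$ over all sticks and note that each summand with $L_i \leq L < l$ vanishes. Your additional remark about why strictness of $l > L$ matters is a fine observation but not needed beyond what the paper already states.
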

\begin{proof}
  Clearly, all summands $\lfloor \nicefrac{L_i}{l} \rfloor$ in the definition
  of $\Lpieces(l)$ are zero for $L_i \leq L < l$.
\end{proof}
As a direct consequence, we can push the time for checking feasibility of a 
candidate solution from being proportional to $n$ down to being proportional 
to the number of $L_i$ larger than a lower bound $L$ on the 
optimal length; we simply preprocess $\mset{L}_{> L}$ in time $\Th(n)$.
Since it is easy to find an $L_i$ that can serve as $L$~-- e.\,g.\ 
any one that is shorter than any known feasible solution~-- we will make use
of this in the definition of our set of candidate cut lengths. 

In addition,  the special shape of \Cuts and \Feasible comes in handy. 
Recall that both functions are step functions with (potential) jump discontinuities 
at lengths of the form $l = \nicefrac{L_i}{j}$ (cf.\ \wref{lem:monotonicity+step}).
We will show that we can restrict our search for optimal cut lengths to these
values, and how to do away with many of them for efficiency. 

Combining the two ideas, we will consider candidate multisets of the following form.
\begin{definition}%
  \pdfbookmark[2]{Candidate Sets}{def:candidates}%
  \label{def:candidates}%
  We define the candidate multiset(s)
    \[ \candidatesmultiset(I,f_l, f_u) \wwrel\ce  
         \biguplus_{i \in I}\ 
         \biggl\{ \frac{L_i}{j} 
           \biggm| f_l(i) \leq j \leq f_u(i)
         \biggr\} \]
  dependent on index set~$I \subseteq [1..n]$ and functions
    $f_l : I \to \N$ and 
    $f_u : I \to \N \cup \{\infty\}$
  which bound the denominator from below and above, respectively; either
  may implicitly depend on $\mset L$ and/or $k$.
\end{definition}
Note that $|\candidatesmultiset| = \sum_{i \in I} [f_u(i) - f_l(i) + 1]$.
We denote the multiset of all candidates by
$\allcandidatesmultiset \ce \candidatesmultiset([1..n],1,\infty)$.

First, let us note that this definition covers the optimal solution as long as
upper and lower bounds are chosen appropriately.
\begin{lemma}\label{lem:optatjump}
  There is an optimal solution on a jump discontinuity of \Lpieces, i.\,e.\ 
  $\optlength \in \allcandidatesmultiset$.
\end{lemma}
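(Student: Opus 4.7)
The plan is to combine the characterization of $\optlength$ from \wref{cor:optismaxfeasible} with the piecewise-constant structure of $\Lpieces$ established in \wref{lem:monotonicity+step}, and to argue that the maximum feasible length must coincide with a jump discontinuity of $\Lpieces$.

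More concretely, I would first recall that $\optlength = \max\{ l \in \Q_{>0} \mid \Feasible(l)\}$, so in particular $\Lpieces(\optlength) \geq k$. Then I would proceed by contradiction: suppose $\optlength$ is \emph{not} a jump discontinuity of $\Lpieces$. By \wref{lem:monotonicity+step}, $\Lpieces$ is piecewise constant with jumps only at points of the form $\nicefrac{L_i}{j}$, so the set of jump discontinuities is locally finite on any compact subinterval of $\Q_{>0}$. Hence there exists some $\varepsilon > 0$ such that the interval $(\optlength-\varepsilon, \optlength+\varepsilon)$ contains no jump discontinuity. On this interval, $\Lpieces$ is constant and therefore equals $\Lpieces(\optlength) \geq k$. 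In particular, any $l \in (\optlength, \optlength+\varepsilon)$ also satisfies $\Feasible(l) = 1$, contradicting the maximality of $\optlength$.

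Therefore $\optlength$ must be a jump discontinuity of $\Lpieces$, and by \wref{lem:monotonicity+step} this means $\optlength = \nicefrac{L_i}{j}$ for some $i \in [1..n]$ and $j \in \N_{>0}$, i.e., $\optlength \in \allcandidatesmultiset$.

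I do not anticipate any substantial obstacle; the only mild subtlety is making sure that ``not a jump discontinuity'' really yields an open neighborhood on which $\Lpieces$ is constant. This uses that jump points of $\Lpieces$ are locally finite (each summand $\lfloor \nicefrac{L_i}{l}\rfloor$ has finitely many jumps on any interval bounded away from $0$, and $\optlength > 0$), so the contradiction argument goes through cleanly.
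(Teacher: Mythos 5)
Your proof is correct and rests on the same two ingredients as the paper's: the piecewise-constant, non-increasing structure from \wref{lem:monotonicity+step} and the characterization $\optlength = \max\{l \mid \Feasible(l)\}$ from \wref{cor:optismaxfeasible}. The paper phrases it slightly differently (the single jump of $\Feasible$ lies at some $\nicefrac{L_i}{j}$ and left-continuity identifies it with $\optlength$), whereas you run an explicit perturbation/contradiction argument via local finiteness of the jump points, but the underlying reasoning is the same.
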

\begin{proof}
  From its definition, we know that \Feasible has exactly one jump 
  discontinuity, and from \wref{lem:monotonicity+step} (via \Lpieces)
  we know that it is one of the $\nicefrac{L_i}{j}$. By \wref{cor:optismaxfeasible} 
  and left"=continuity of \Feasible (again via \Lpieces) we know that this is 
  indeed our solution~\optlength.
\end{proof}
Of course, our all"=encompassing candidate multiset $\allcandidatesmultiset$ is 
infinite (as is the corresponding set) and does hence not lend itself to a 
simple search. But there is hope:
we already know that $\optlength \geq l$ for any feasible
$l$ which immediately implies finite (albeit possibly large) 
bounds on $j$ (if we have such $l$).
We will now show how to restrict the set of candidates via suitable index sets
$I$ and bounding functions $f_l$ and $f_u$ so that we can efficiently search 
for~\optlength.
We have to be careful not to inadvertently remove \optlength by choosing bad
bounding functions.
\begin{lemma}
  \pdfbookmark[2]{Admissible Bounds}{lem:admissible-bounds}%
  \label{lem:admissible-bounds}%
  Let $I \subseteq [1..n]$ and $f_l,f_u : I \to \N$ so that 
  \begin{enumerate}[label=\roman*)]
    \item\label{item:bound.lowergood}%
      $f_l(i) = 1$ or 
      $\nicefrac{L_i}{(f_l(i) - 1)}$ is infeasible,  
      and
    \item\label{item:bound.uppergood}%
      $\nicefrac{L_i}{f_u(i)}$ is feasible,
  \end{enumerate}
  for all $i \in I$, and 
  \begin{enumerate}[label=\roman*),start=3]
    \item\label{item:bound.indicesgood}%
      $L_{i'}$ is suboptimal (i.\,e.\ $L_{i'}$ is feasible, but not optimal)
  \end{enumerate}
  for all $i' \in [1..n] \setminus I$.
  Then,
    \[ \optlength \in \candidatesmultiset(I, f_l, f_u) . \]
\end{lemma}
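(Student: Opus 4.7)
My plan is to leverage \wref{lem:optatjump}, which already tells us $\optlength = L_{i_0}/j_0$ for \emph{some} $i_0 \in [1..n]$ and $j_0 \in \N_{>0}$. All that remains is to verify that in fact $i_0 \in I$ and $f_l(i_0) \leq j_0 \leq f_u(i_0)$, so that this particular occurrence of $\optlength$ lies in $\candidatesmultiset(I, f_l, f_u)$.

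The first claim, $i_0 \in I$, I would prove by contradiction. Assume $i_0 \notin I$. Condition~\ref{item:bound.indicesgood} then says $L_{i_0}$ is feasible but not optimal. By \wref{cor:optismaxfeasible}, $\optlength$ is the largest feasible length, so any feasible $l$ satisfies $l \leq \optlength$; together with $L_{i_0} \neq \optlength$ this gives $L_{i_0} < \optlength$. But since $j_0 \geq 1$, we also have $\optlength = L_{i_0}/j_0 \leq L_{i_0}$, a contradiction.

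For the bounds on $j_0$, the key underlying observation is that $l$ is feasible if and only if $l \leq \optlength$: one direction is \wref{cor:optismaxfeasible}, the other follows from monotonicity of $\Lpieces$ (\wref{lem:monotonicity+step}), since $\Lpieces(l) \geq \Lpieces(\optlength) \geq k$ for $l \leq \optlength$. With this, the upper bound is immediate from condition~\ref{item:bound.uppergood}: $L_{i_0}/f_u(i_0)$ is feasible, hence $L_{i_0}/f_u(i_0) \leq \optlength = L_{i_0}/j_0$, and dividing by $L_{i_0} > 0$ yields $j_0 \leq f_u(i_0)$. For the lower bound, condition~\ref{item:bound.lowergood} gives two cases: if $f_l(i_0) = 1$ then $j_0 \geq 1 = f_l(i_0)$ trivially; otherwise $L_{i_0}/(f_l(i_0)-1)$ is infeasible, so $L_{i_0}/(f_l(i_0)-1) > \optlength = L_{i_0}/j_0$, which rearranges to $j_0 > f_l(i_0) - 1$, i.e.\ $j_0 \geq f_l(i_0)$.

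I do not expect a real obstacle here; the statement is essentially a bookkeeping check that the three admissibility conditions exactly capture what a representation $\optlength = L_{i_0}/j_0$ needs in order to survive the restriction to $\candidatesmultiset(I, f_l, f_u)$. The only mildly delicate point is the degenerate case $f_l(i_0) - 1 = 0$, which is precisely why condition~\ref{item:bound.lowergood} is phrased as a disjunction.
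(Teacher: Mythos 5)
Your proposal is correct and is essentially the paper's own argument in contrapositive form: where you fix a representation $\optlength = \nicefrac{L_{i_0}}{j_0}$ (via \wref{lem:optatjump}) and verify $i_0 \in I$ and $f_l(i_0) \le j_0 \le f_u(i_0)$, the paper shows that every excluded pair $(i,j)$ yields a length that is infeasible or suboptimal~-- the same three case distinctions, resting on the same lemmas (\wref{lem:optatjump}, \wref{cor:optismaxfeasible}, and monotonicity from \wref{lem:monotonicity+step}). Your handling of the degenerate case $f_l(i_0)=1$ and the integrality step $j_0 > f_l(i_0)-1 \Rightarrow j_0 \ge f_l(i_0)$ are both sound.
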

\begin{proof}
  We argue that $\allcandidatesmultiset \setminus \candidatesmultiset(I, f_l, f_u)$ 
  does \emph{not} contain the optimal solution~\optlength;
  the claim then follows with \wref{lem:optatjump}.
  
  Let $i \in [1..n]$ and $j \in [1..\infty]$ be arbitrary but fixed. We 
  investigate three cases for why length $\nicefrac{L_i}{j}$ may not be included
  in $\candidatesmultiset(I, f_l, f_u)$.
  \begin{description}[bmargin={\boldmath$j > f_u(i)$:}]
    \item[\boldmath$i \notin I$:]
      Candidate $\nicefrac{L_{i}}{j}$ is suboptimal by \wref{cor:optismaxfeasible} because
      $\nicefrac{L_{i}}{j} \leq L_{i}$ and $L_i$ itself is already suboptimal
      by~\ref{item:bound.indicesgood}.
      
    \item[\boldmath$j < f_l(i)$:]
      In this case, we must have $f_l(i)>1$, 
      so $\nicefrac{L_i}{(f_l(i)-1)}$ is infeasible by \ref{item:bound.lowergood}.
      Clearly, $\nicefrac{L_i}{j} > \nicefrac{L_i}{f_l(i)}$,
      so $\nicefrac{L_i}{j} \ge \nicefrac{L_i}{(f_l(i)-1)}$ and
      we get by monotonicity of \Feasible (cf.\ \wref{lem:monotonicity+step} via \Lpieces)
      that $\nicefrac{L_i}{j}$ is infeasible, as well.
      
    \item[\boldmath$j > f_u(i)$:] Clearly, $\nicefrac{L_i}{j} < \nicefrac{L_i}{f_u(i)}$,
      where the latter is already feasible by~\ref{item:bound.uppergood}.
      So, again by \wref{cor:optismaxfeasible}, $\nicefrac{L_i}j$ is suboptimal.
  \end{description}
  Thus, we have shown that every candidate length $\nicefrac{L_i}{j}$
  given by $(i,j) \in I \times [1..\infty]$ is either in 
  $\candidatesmultiset(I, f_l, f_u)$ or, failing that, infeasible or suboptimal.
\end{proof}
We will call triples $(I, f_l, f_u)$ of index set and bounding functions that 
fulfill \wref{lem:admissible-bounds} \emph{admissible restrictions} 
(for $\mset L$ and $k$). We say that $\candidatesmultiset(I, f_l, f_u)$ is
admissible if $(I, f_l, f_u)$ is an admissible restriction.

We will restrict ourselves for the remainder of this article to index sets~$\Ico$ 
that contain indices of lengths that are larger than the 
$n'$th~largest\footnote{%
  We borrow from the common notation $S_{(k)}$ for the $k$th~smallest element 
  of sequence $S$.
} length~$L^{(n')}$ in $\mset L$, for $n' = \min(k, n+1)$. 
This corresponds to working with $I_{> L^{(n')}}$ as defined in 
\wref{lem:feasible-computation}. We will have to show that such index sets are
indeed admissible (alongside suitable bounding functions); intuitively, if
$k \leq n$ then $L^{(k)}$ is always feasible, and otherwise we have to work with
all input lengths. We fix this convention for clarity and notational ease.
\begin{definition}%
  \pdfbookmark[2]{Cut-off length and index set}{def:ourindexset}%
  \label{def:ourindexset}%
  Define cut"=off length $\Lco$ by
    \[ 
       \Lco \wwrel\ce \begin{cases}
                L^{(k)}, &k \leq n; \\
                0,       &k > n, 
              \end{cases} 
    \]
  and index set $\Ico \subseteq [1..n]$ as
    \[ \Ico \wwrel\ce \begin{cases}
                 I_{> \Lco},        &\Lco \text{ not optimal}; \\
                 \text{undefined}, &\text{otherwise}. 
               \end{cases} \]
  Note that $\Ico = [1..n]$ if $k > n$.
\end{definition}
We will later see that we never invoke the undefined case as we already have 
$\optlength = L^{(k)}$ then.

In order to illustrate that we have found a useful criterion for admissible
bounds, let us investigate shortly an admittedly rather obvious choice of
bounding functions. We use the null"=bound $f_l \!\rel= i \mapsto 1$ and
$f_u \!\rel= i \mapsto k$; an optimal solution does not cut more than $k$ 
(equal"=sized) pieces out of any one stick. 
The restriction $([1..n],1,k)$ is clearly admissible; in particular, every 
$\nicefrac{L_i}{k}$ is feasible. 

\begin{examplectd}{ex:nonaivealg}
  \def\go#1{{\color{black!50}#1}}
  For $\mset{L}_{\mathrm{ex}}$ and $k=9$, we get
  \begin{align*}
    \candidatesmultiset(\Ico,1,k) = 
      \biggl\{ &\frac81,\frac82,\frac83,\frac84,\frac85,\frac86,\frac87,\frac88,\frac89,
             \;\frac71,\frac72,\frac73,\frac74,\frac75,\frac76,\go{\frac77},\frac78,\frac79,
            \\[.5ex]
              &\frac61,\frac62,\go{\frac63},\frac64,\frac65,\go{\frac66},\frac67,\frac68,\frac69,
             \;\go{\frac11},\frac12,\frac13,\frac14,\frac15,\frac16,\frac17,\frac18,\frac19 
              \biggr\} ,
  \end{align*}
  that is 36 candidates. Note that there are four duplicates, so there are
  32 distinct candidates.
\end{examplectd}

We give a full proof of admissibility and worst"=case size here; 
it is illustrative even if simple because later proofs will follow
the same structure.

\begin{lemma}%
  \pdfbookmark[2]{Quadratic Candidate Set}{lem:quadr-candidates}%
  \label{lem:quadr-candidates}%
  If $\Lco \neq \optlength$, then
  $\candidatesmultiset(\Ico,1,k)$ is admissible. \\
  Furthermore,
  $|\candidatesmultiset(\Ico,1,k)| = k \cdot \min(k-1,n) 
                                  \in \Th\bigl( k \cdot \min(k,n) \bigr)$
  in the worst case.
\end{lemma}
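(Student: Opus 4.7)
The plan is to verify the three admissibility conditions of \wref{lem:admissible-bounds} for the triple $(\Ico,\,i\mapsto 1,\,i\mapsto k)$ and then count the multiset directly from \wref{def:candidates}. No machinery beyond single-stick feasibility and the monotonicity of \Feasible is needed.

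Condition~\ref{item:bound.lowergood} of \wref{lem:admissible-bounds} is vacuous because $f_l \equiv 1$. For condition~\ref{item:bound.uppergood}, I would observe that for $i \in \Ico$ cutting the single stick $L_i$ into $k$ equal pieces already yields $k$ maximal pieces of length $\nicefrac{L_i}{k}$, so $\Lpieces(\nicefrac{L_i}{k}) \ge k$ and $\nicefrac{L_i}{k}$ is feasible. For condition~\ref{item:bound.indicesgood}, the case $k > n$ is vacuous because then $\Ico = [1..n]$ by \wref{def:ourindexset}. In the case $k \le n$, every $i' \in [1..n] \setminus \Ico$ satisfies $L_{i'} \le \Lco = L^{(k)}$; the $k$ longest sticks each contribute at least one maximal piece of length~$L^{(k)}$, so $L^{(k)}$ is feasible, and by monotonicity of \Feasible (via \wref{lem:monotonicity+step}) the smaller length $L_{i'}$ is feasible as well. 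The hypothesis $\Lco \neq \optlength$ together with \wref{cor:optismaxfeasible} yields $L_{i'} \le L^{(k)} < \optlength$, so $L_{i'}$ is suboptimal.

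For the cardinality, the definition gives
\[
  |\candidatesmultiset(\Ico,1,k)|
    \wrel= \sum_{i \in \Ico} (k - 1 + 1)
    \wrel= k \cdot |\Ico|.
\]
It remains to bound $|\Ico|$ in the worst case: if $k \le n$, at most $k-1$ input sticks can be strictly longer than the $k$th-largest one, and this bound is tight (e.g.\ when all input lengths are pairwise distinct); if $k > n$, then $|\Ico| = n$. Combining the two cases gives $|\Ico| = \min(k-1, n)$ in the worst case, yielding the stated equality $k \cdot \min(k-1, n) \in \Th\bigl(k \cdot \min(k,n)\bigr)$.

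The only mildly subtle point I expect is the case split $k \le n$ versus $k > n$, which appears both when checking condition~\ref{item:bound.indicesgood} and when bounding $|\Ico|$; everything else is a direct unfolding of the definitions and a single appeal to monotonicity.
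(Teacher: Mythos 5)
Your proof is correct and follows essentially the same route as the paper's: verifying the three conditions of \wref{lem:admissible-bounds} case by case (with the same $k \le n$ versus $k > n$ split for condition~\ref{item:bound.indicesgood}) and then counting $k \cdot |\Ico|$ with $|\Ico| = \min(k-1,n)$ attained for pairwise distinct lengths. Your treatment of condition~\ref{item:bound.indicesgood} is slightly more explicit than the paper's about why $L^{(k)}$ is feasible and why $L_{i'} < \optlength$, but the argument is the same.
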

\begin{proof}
  First, we show that $(\Ico, 1, k)$ is an admissible restriction 
  (cf.\ \wref{lem:admissible-bounds}).
  \begin{description}[margin={ad \ref{item:bound.indicesgood}:}]
    \item[ad \ref{item:bound.lowergood}:]%
      Since $f_l(i) = 1$ for all $i$, this checks out.
    \item[ad \ref{item:bound.uppergood}:]%
      Clearly, $\Lpieces(\nicefrac{L_i}{k}) \geq k$ just from the contribution 
      of summand $\lfloor \nicefrac{L_i}{l} \rfloor$.
    \item[ad \ref{item:bound.indicesgood}:]%
      We distinguish the two cases of $\Lco$ (cf.\ \wref{def:ourindexset}).
      \begin{itemize}
        \item If $k > n$ then $\Ico = [1..n]$ which trivially
          fulfills \ref{item:bound.indicesgood}.
        \item In the other case, $k \leq n$ and $\Lco$ is not optimal by assumption.
          Then $\Ico = I_{> \Lco}$; therefore $L_{i'} \leq \Lco$
          for any $i' \notin \Ico$ and \wref{lem:monotonicity+step} implies that
          $L_{i'}$ is not optimal as well.
      \end{itemize}
  \end{description}
  This concludes the proof of the first claim.
  
  As for the number of candidates, note that clearly 
    $|\candidatesmultiset(\Ico, 1, k)| = \sum_{i \in \Ico} k = |\Ico| \cdot k$;
  the claim follows with $|\Ico| = \min(k-1, n)$ in the case that the $L_i$
  are pairwise distinct (cf.\ \wref{def:ourindexset}).
\end{proof}

Since we now know that we have to search only a finite domain for \optlength,
we can start thinking about effective and even efficient algorithms.

\section{Algorithms}
\label{sec:algorithms}

Just from the discussion above, a fairly elementary algorithm presents itself:
first cut the input down to the lengths given by $\Ico$ (cf.\ \wref{def:ourindexset}), 
then use binary search on the candidate set w.\,r.\,t.\ \Feasible. 
This works because \Feasible is non"=increasing (cf.\ \wref{cor:optismaxfeasible} 
and \wref{lem:monotonicity+step}).

\begin{algorithm}{$\alg{f_l, f_u}(\mset L, k) :$}
  \pdfbookmark[2]{Algorithm \algascii}{alg:searchalg}%
  \label{alg:searchalg}
  ~
  \begin{enumerate}
    \item\label{srcline:searchalg.rankbound}
      \gdef\algorithmsRankbound{%
        Compute $n' \ce \min(k, n+1)$.
      }\algorithmsRankbound
    \item\label{srcline:searchalg.downsize}
      \def\algorithmsLabelPrefix{searchalg}%
      \gdef\algorithmsDownsize{%
        If $n' \leq n$:
        \begin{enumerate}[label=2.\arabic*.,margin={2.2.}]
          \item\label{srcline:\algorithmsLabelPrefix.select}%
            Determine $\Lco \ce L^{(n')}$, i.\,e.\ the $n'$th~largest length.
          \item\label{srcline:\algorithmsLabelPrefix.optcheck}%
            If $\Lco$ is optimal, answer $\optlength = \Lco$ (and terminate).
        \end{enumerate}
      \item[2$'$\!.] Otherwise (i.\,e. $n' > n$):
        \begin{enumerate}[label=2.\arabic*.,start=3,margin={2.4.}]
          \item\label{srcline:\algorithmsLabelPrefix.simpleindices}%
            Set $\Lco \ce 0$.
        \end{enumerate}
      \item\label{srcline:\algorithmsLabelPrefix.partition}%
            Assemble $\Ico \ce I_{> \Lco}$.
      }\algorithmsDownsize
    \item\label{srcline:searchalg.candidates}%
      Compute $\candidatesmultiset \ce \candidatesmultiset(\Ico, f_l, f_u)$ as sorted array.
    \item\label{srcline:searchalg.search}%
      Find \optlength by binary search on $\candidatesmultiset$ w.\,r.\,t.\ 
      \Feasible.
    \item\label{srcline:searchalg.return}%
      Answer \optlength.
  \end{enumerate}
\end{algorithm}

For completeness we specify that $f_l, f_u : \Ico \to \N$.

\begin{theorem}\label{thm:algorithm}
  Let $(\Ico, f_l, f_u)$ be an admissible restriction where $f_l$ and $f_u$ can be 
  evaluated in time $\Oh(1)$.
  
  Then, algorithm $\alg{f_l, f_u}$ solves \optproblem in (worst"=case) time
    \[ T(n,k) 
        \in \Th(  n 
                + |\candidatesmultiset| \log |\candidatesmultiset| ) \]
  and space
    \[ S(n,k) \in \Th(n + |\candidatesmultiset|). \]
\end{theorem}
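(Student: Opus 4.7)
The plan is to establish correctness first and then charge the cost of each step of \alg{f_l, f_u}; correctness rests on the structural results already in hand, while the running"=time analysis requires one small but nontrivial observation about step~2.2.

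For correctness, if the algorithm terminates in step~2.2 then $\Lco$ is declared optimal by the branch condition, so the answer is correct. Otherwise $(\Ico, f_l, f_u)$ is admissible by hypothesis, so \wref{lem:admissible-bounds} gives $\optlength \in \candidatesmultiset(\Ico, f_l, f_u)$. By \wref{lem:monotonicity+step}, \Feasible inherits monotonicity from \Lpieces and is non"=increasing in~$l$, so once we sort the candidate array the feasible entries form a contiguous portion; binary search then identifies the largest feasible candidate, which equals \optlength by \wref{cor:optismaxfeasible}.

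For the running time, step~1 is $\Oh(1)$, step~2.1 is linear"=time rank selection in $\Oh(n)$, and step~2.5 is one scan of the input in $\Oh(n)$. Step~2.2 needs an $\Oh(n)$ optimality test, which is possible because by \wref{lem:monotonicity+step} it suffices to locate the smallest jump point of \Lpieces strictly above $\Lco$ and to evaluate \Feasible there; both tasks reduce to a sweep over the $n$ sticks. Step~3 generates all $|\candidatesmultiset|$ values in $\Oh(1)$ each (since $f_l,f_u$ cost $\Oh(1)$) and sorts them in $\Oh(|\candidatesmultiset| \log|\candidatesmultiset|)$. Step~4 performs $\Oh(\log|\candidatesmultiset|)$ rounds of binary search; each feasibility test costs $\Oh(|\Ico|)$ after the preprocessing afforded by \wref{lem:feasible-computation}, and since every index in $\Ico$ contributes at least one candidate we have $|\Ico| \leq |\candidatesmultiset|$, so this contributes $\Oh(|\candidatesmultiset| \log|\candidatesmultiset|)$. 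Summing up yields the claimed upper bound; merely reading the input and sorting $|\candidatesmultiset|$ values already furnish the matching lower bound.

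The space analysis is essentially a tally: the input, $\Ico$, and the sorted candidate array are the only structures whose sizes grow with the input, contributing $\Oh(n)$, $\Oh(n)$, and $\Oh(|\candidatesmultiset|)$ respectively. The only real obstacle, in my view, is arguing that step~2.2 can be carried out in $\Oh(n)$ without anticipating any of the work budgeted for step~4; once that subtlety is resolved, everything else is routine bookkeeping over linear"=time selection, sorting, and $\Oh(\log|\candidatesmultiset|)$ rounds of feasibility checks.
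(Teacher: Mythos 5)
Your proposal is correct and follows essentially the same route as the paper's proof: correctness from \wref{lem:admissible-bounds} together with monotonicity (\wref{lem:monotonicity+step}, \wref{cor:optismaxfeasible}), then a step-by-step cost tally whose two key points are the $\Oh(n)$ optimality test in step~2.2 and the bound $|\Ico| \leq |\candidatesmultiset|$ that lets sorting dominate the binary search. The only (immaterial) difference is the mechanics of step~2.2: you test feasibility at the smallest jump point strictly above \Lco, while the paper equivalently counts the integral ratios $\nicefrac{L_i}{\Lco}$ at \Lco itself; both are single $\Oh(n)$ sweeps, though you should note (as the paper does implicitly) that $\Feasible(\Lco)$ must also hold, which is automatic since $L^{(k)}$ is always feasible when $k \le n$.
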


\begin{proof} We deal with the three claims separately.
  \begin{description}[margin={Correctness}]
    \item[Correctness] follows immediately from \wref{lem:admissible-bounds} and
      \wref{lem:monotonicity+step} resp.\ \wref{cor:optismaxfeasible}.
      Note in particular that $\alg{}$ does indeed compute $\Ico$ as defined in
      \wref{def:ourindexset}, and the undefined case is never reached.
  
    \item[Runtime:] Since the algorithm contains neither loops nor recursion
      (at the top level) we can analyze every step on itself.
      \begin{description}[margin={Steps \ref{srcline:searchalg.rankbound}, \ref{srcline:searchalg.simpleindices}:},itemsep=1ex]
        \item[Steps \ref{srcline:searchalg.rankbound},
                    \ref{srcline:searchalg.simpleindices}:]
          These clearly take time $\Oh(1)$.
          
        \item[Step \ref{srcline:searchalg.select}:]
           There are well"=known algorithms that perform selection
           in worst"=case time $\Th(n)$.
           
        \item[Step \ref{srcline:searchalg.optcheck}:]
          Testing $\Lco$ for optimality is as easy as computing $\Feasible(\Lco)$
          and counting the number $a$ of integral $\nicefrac{L_i}{\Lco}$ in 
          $\Lpieces(\Lco)$. If $\Feasible(\Lco)$ (i.\,e., $\Lpieces(\Lco) \geq k$) and 
          $\Lpieces(\Lco) - a < k$, then $\Lco$ is the jump discontinuity of \Feasible
          and $\Lco$ is optimal; otherwise it is not.
          
          Thus, this step takes time $\Th(n)$.
          
        \item[Step \ref{srcline:searchalg.partition}:]
          This can be implemented by a simple iteration over $[1..n]$ with
          a constant"=time length check per entry, hence in time $\Th(n)$.
          
          $\Ico$ can be assembled by one traversal over $\mset L$ and
          stored as simple linked list in (worst"=case) time $\Th(n)$.
        
        \item[Step \ref{srcline:searchalg.candidates}:] 
          By \wref{def:candidates} we have $|\candidatesmultiset|$ many 
          candidates. Sorting these takes time 
          $\Th(|\candidatesmultiset| \log |\candidatesmultiset|)$
          using e.\,g.\ Heapsort.
        
        \item[Step \ref{srcline:searchalg.search}:]
          The binary search clearly takes at most 
            $\lfloor \log_2 |\candidatesmultiset|+1 \rfloor$ 
          steps.
          In each step, we evaluate \Feasible in time 
          \begin{itemize}
            \item $\Th(|\Ico|)$ for all candidates $l > \Lco$ using
              \wref{lem:feasible-computation}, and
            \item $\Oh(1)$ for $l \leq \Lco$ since we already know from 
              feasibility of $\Lco$ via \wref{lem:monotonicity+step}
              that these are feasible, too.
          \end{itemize}
          Therefore, this step needs time $\Th(|\Ico| \log |\candidatesmultiset|)$ 
          time in total. 
          
          It is easy to see that admissible bounds always fulfill $f_u(i) \geq f_l(i)$ 
          for all $i \in \Ico$. Therefore, $|\Ico| \leq |\candidatesmultiset|$ so
          the runtime of this step is dominated by step~\ref{srcline:searchalg.candidates}.          
      \end{description}

    \item[Space:] The algorithm stores $\mset L$ of size $\Th(n)$, plus maybe a 
      copy for selection and partitioning (depends the actual algorithm used). 
      Step~\ref{srcline:searchalg.candidates}
      then creates a $\Th(|\candidatesmultiset|)$"/large representation of the 
      candidate set. 
      Both step~\ref{srcline:searchalg.partition} and~\ref{srcline:searchalg.search}
      can be implemented iteratively, and a potential recursion depth 
      (and therefore stack size) in step~\ref{srcline:searchalg.select} 
      is bounded from above by its runtime $\Oh(n)$.
      A few additional auxiliary variables require only constant amount of memory.
  \end{description}
\end{proof}

For practical purposes, eliminating duplicates in Step~\ref{srcline:searchalg.candidates}
is virtually free and can speed up the subsequent search. In the worst case,
however, we save at most a constant factor with the bounding functions we
consider (see \wref{app:additional-results}), so we decided to stick to the 
clearer presentation using multisets (instead of candidate \emph{sets}).

\subsection{Knowing Beats Searching}
\label{sec:selection-algorithm}

We have seen that the runtime of algorithm \alg{} is dominated by \emph{sorting}
the candidate set. This is necessary for facilitating binary search; but do we 
\emph{have} to search? 
As it turns out, a slightly different point of view on the problem allows us 
to work with the unsorted candidate multiset and we can save a 
factor~$\log |\candidatesmultiset|$.

The main observation is that \Lpieces increases its value by one at
every jump discontinuity (for each $\nicefrac{L_i}{j}$ that has that same value).
So, knowing $\Lpieces(l)$ for any candidate length $l$, we know exactly how many candidates
(counting duplicates) we have to move to get to the jump of \Feasible. 
Therefore, we can make do with \emph{selecting} the solution from our candidate 
set instead of searching through it.

The following lemma states the simple observation that $\lpieces(L,l)$ is 
intimately related to the ``position'' of $l$ in the decreasingly sorted 
candidate multiset for $L$.

\begin{lemma} 
  For all $L,l \in \Q_{>0}$,
   \[ \lpieces(L,l) \rel= 
		    \bigl|\bigl\{ \nicefrac{L}{j} \mid j \in \N \land \nicefrac{L}{j} \ge l \bigr\}\bigr| . \]
\end{lemma}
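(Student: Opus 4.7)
The plan is to verify the identity by an elementary counting argument: rewrite the defining inequality of the set on the right, observe that it restricts $j$ to a finite initial segment of $\N_{>0}$, and match the cardinality against the closed-form $\lpieces(L,l) = \lfloor L/l \rfloor$ established earlier.

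First I would translate the membership condition. Since $L, l > 0$, the inequality $L/j \ge l$ is equivalent to $j \le L/l$. For $j$ ranging over the positive integers this is equivalent to $j \in \{1, 2, \ldots, \lfloor L/l \rfloor\}$, an initial segment of $\N_{>0}$ of size $\lfloor L/l \rfloor$. (The case $\lfloor L/l \rfloor = 0$ is fine: then the set is empty, matching $\lpieces(L,l) = 0$.)

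Next I would show that the map $j \mapsto L/j$ is injective on $\N_{>0}$, so that no two admissible values of $j$ produce the same element of the set. This is immediate: $L/j_1 = L/j_2$ combined with $L > 0$ forces $j_1 = j_2$. Hence the cardinality of the set equals the number of admissible $j$, which is $\lfloor L/l \rfloor$.

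Finally, recalling that $\lpieces(L,l) \ce \lfloor L/l \rfloor$ by definition, the two sides agree. I do not foresee any real obstacle here — the lemma is essentially a change of point of view, relabelling ``how many maximal pieces does $l$ produce from $L$'' as ``how many of the candidate lengths $L/j$ lie at or above $l$''; the only minor care needed is the interpretation of $\N$ (we must exclude $j=0$, where $L/j$ is undefined, which is consistent with the paper's convention of taking denominators in $\N_{>0}$).
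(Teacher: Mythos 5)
Your proposal is correct and follows essentially the same route as the paper's own proof: translate $\nicefrac{L}{j}\ge l$ into $j\le \nicefrac{L}{l}$, count the admissible $j$ (using that $j\mapsto \nicefrac{L}{j}$ is injective), and identify the result with $\lfloor \nicefrac{L}{l}\rfloor = \lpieces(L,l)$. Your explicit treatment of the empty case $\lfloor \nicefrac{L}{l}\rfloor = 0$ is a slightly cleaner substitute for the paper's convention $\nicefrac{L}{0}=\infty$, but the argument is the same.
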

\begin{proof}
	The right"=hand side equals the largest integer $j \in \N_0$ for which
	$\nicefrac{L}{j} \ge l$, i.\,e.\ $j \le \nicefrac{L}{l}$, which is
	by definition $\lfloor \nicefrac{L}{l} \rfloor = \Lpieces(L,l)$. 
	Note that this argument extends to the case $L < l$ by formally 
	setting $\nicefrac{L}{0} = \infty \ge l$.
\end{proof}

Since we consider multisets, we can lift this property to $\Lpieces(l)$:

\begin{corollary}\label{cor:lpieces-counts-jumps}
  For all $l \in \Q_{>0}$,
	  \[ \Lpieces(l) \wwrel=
		     \sum_{i=1}^n 
			     \bigl|\bigl\{ \nicefrac {L_i}j \mid 
				     j\in\N \land \nicefrac {L_i}j \ge l \bigr\}\bigr|
		     \wwrel=
			     \bigl| \allcandidatesmultiset \cap [l,\infty) \bigr|
	  .\]
\qed\end{corollary}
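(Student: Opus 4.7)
The plan is to derive this as a direct consequence of the preceding lemma by summing over the input sticks and then reinterpreting the resulting sum in terms of the all-candidates multiset.

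First, I would apply the preceding lemma pointwise: for each $L_i \in \mset L$, we have $\lpieces(L_i, l) = |\{L_i/j \mid j \in \N \land L_i/j \ge l\}|$. Summing this identity over $i \in [1..n]$ and invoking the definition $\Lpieces(l) = \sum_{i} \lpieces(L_i, l)$ from \wref{sec:problem-definition} directly yields the first equality of the corollary.

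For the second equality, I would unfold the definition $\allcandidatesmultiset = \candidatesmultiset([1..n], 1, \infty) = \biguplus_{i=1}^n \{L_i/j \mid j \in \N_{>0}\}$ from \wref{def:candidates}. Intersecting a multiset union with the set $[l, \infty)$ commutes with $\uplus$ (this is the natural extension of set intersection laid out at the start of \wref{sec:problem-definition}), so
\[ \allcandidatesmultiset \cap [l, \infty) \wrel= \biguplus_{i=1}^n \bigl( \{L_i/j \mid j \in \N_{>0}\} \cap [l,\infty) \bigr) \wrel= \biguplus_{i=1}^n \{L_i/j \mid j \in \N_{>0} \land L_i/j \ge l\} . \]
Taking the cardinality, and using that $|\mset A \uplus \mset B| = |\mset A| + |\mset B|$, gives the middle expression of the corollary.

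There is no real obstacle; the only subtlety worth noting is that $\allcandidatesmultiset$ is a multiset, so identical rationals arising from different indices $i$ must be counted with their multiplicity~-- exactly what the sum $\sum_i |\cdot|$ already does, so the equality is faithful. A one-line remark covers this.
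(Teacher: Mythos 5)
Your proposal is correct and matches the paper's intent exactly: the paper states this corollary with no written proof (marking it \qed after the remark that the preceding lemma ``lifts'' to multisets), and the lifting it has in mind is precisely your argument~-- apply the lemma to each $L_i$, sum using the definition of $\Lpieces$, and identify the sum of cardinalities with the cardinality of the multiset union intersected with $[l,\infty)$. Your closing remark on counting duplicates with multiplicity is the one point worth being explicit about, and you handle it correctly.
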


In other words, $\Lpieces(l)$ is the number of occurrences of candidates 
that are at least $l$.
We can use this to transform our search problem (cf.\ \wref{cor:optismaxfeasible})
into a selection problem.
\begin{lemma}\label{lem:opt-is-kth-cand}
  $\optlength = \allcandidatesmultiset^{(k)}$.
\end{lemma}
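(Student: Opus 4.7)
The plan is to chain the two already-established characterizations: Corollary~\ref{cor:optismaxfeasible} (with Lemma~\ref{lem:optatjump}) says that $\optlength$ is the largest feasible length and it lies in $\allcandidatesmultiset$, while Corollary~\ref{cor:lpieces-counts-jumps} re-expresses feasibility as a counting condition on that very multiset. Combining these should make $\optlength$ pop out directly as the $k$th largest element.

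Concretely, I would first list $\allcandidatesmultiset$ in non-increasing order as $a_1 \ge a_2 \ge a_3 \ge \cdots$ (with multiplicities), so that by definition $\allcandidatesmultiset^{(k)} = a_k$. The goal is then to show that $a_k$ is feasible but no strictly larger length is. For feasibility, Corollary~\ref{cor:lpieces-counts-jumps} gives
\[
  \Lpieces(a_k) \wrel= \bigl| \allcandidatesmultiset \cap [a_k,\infty) \bigr| \wrel\ge k,
\]
since $a_1,\dots,a_k$ all satisfy $a_i \ge a_k$. For infeasibility of any $l > a_k$, I would observe that the indices with $a_i \ge l$ must satisfy $a_i > a_k$, and because the $a_i$ are sorted, there are at most $k-1$ such indices (otherwise $a_k$ would itself exceed $a_k$). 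Hence $\Lpieces(l) \le k-1 < k$, so $l$ is infeasible.

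Together, $a_k$ is the maximum feasible length in $\allcandidatesmultiset$; and by Lemma~\ref{lem:optatjump} combined with Corollary~\ref{cor:optismaxfeasible}, $\optlength$ equals precisely that maximum. Therefore $\optlength = a_k = \allcandidatesmultiset^{(k)}$.

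The only subtlety — really the one place to take a bit of care — is the handling of ties in the multiset: one must argue correctly that ``strictly larger than $a_k$'' implies ``strictly fewer than $k$ elements are at least $l$,'' which uses the sorted order together with the fact that the multiset counts repetitions. Beyond that, there is no substantive obstacle; the statement is essentially a reformulation of Corollary~\ref{cor:lpieces-counts-jumps} as an order-statistic identity.
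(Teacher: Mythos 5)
Your proposal is correct and follows essentially the same route as the paper's proof: both rest on \wref{cor:lpieces-counts-jumps} to show that $\allcandidatesmultiset^{(k)}$ is feasible while everything strictly larger is not, and then conclude via \wref{cor:optismaxfeasible}. The only (harmless) difference is that you bound $\Lpieces(l)$ directly for every $l > \allcandidatesmultiset^{(k)}$ using the sorted enumeration, whereas the paper first establishes the two-sided bound \eqref{eq:lpieces-bound-by-rank+occur}, steps to the next larger candidate $\hat l$, and invokes monotonicity of \Feasible.
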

\begin{proof}
  Denote with $\allcandidates$ the \emph{set} of all candidates, that is
  $l \in \allcandidates \iff \allcandidatesmultiset(l) > 0$.
  We can thus write the statement of \wref{cor:lpieces-counts-jumps} as
  \begin{equation}\label{eq:lpieces-by-occurs}
    \Lpieces(l) \wwrel= \sum_{\substack{l' \,\in\, \allcandidates\\l' \,\geq\, l}} 
                    \allcandidatesmultiset(l') .
  \end{equation}
  As a direct consequence, we get for every $i \in \N$ that
  \begin{equation}\label{eq:lpieces-bound-by-rank+occur}
      i 
    \wwrel\leq 
      \Lpieces\bigl( \allcandidatesmultiset^{(i)} \bigr) 
    \wwrel\leq 
      i + \allcandidatesmultiset\bigl( \allcandidatesmultiset^{(i)} \bigr) 
        - 1;
  \end{equation}
  see \wref{fig:selection-sketch} for a sketch of the situation.
  \begin{figure}
    \begin{center}
      \begin{tikzpicture}[x=10mm,y=-7mm]
        \foreach \x/\lab/\i/\m in {%
             0/{$i$}/{$l_i$}/{$\Lpieces(l_i)$},
             1/{$1$}/{$10$}/{$1$},%
             2/{$2$}/{$8$}/{$4$},%
             3/{$3$}/{$8$}/{$4$},%
             4/{$4$}/{$8$}/{$4$},%
             5/{$5$}/{$5$}/{$5$},%
             6/{$6$}/{$4$}/{$7$},%
             7/{$7$}/{$4$}/{$7$}%
        } {%
          \ifthenelse{ \x > 0 }{%
            \draw[fill=black!10,thin] 
              ($(\x,0.5) + (-5mm,0)$) rectangle +(10mm,1); 
          }{}
          \node at (\x, 0) {\textsmaller{\lab}};
          \node at (\x, 1) {\i};
          \node at (\x, 2) {\m};
        }
        
        \fill[black!10] ($(8,0.5) + (-5mm,0)$) rectangle +(10mm,1);
        \draw[thin]
          ($(8,0.5) + (-5mm,0)$) -- +(10mm,0)
          ++(0,1) -- ++(10mm,0);
        \node at (8,1) {$\dots$};
      \end{tikzpicture}
    \end{center}
    \caption{An example illustrating \wref[eq.]{eq:lpieces-bound-by-rank+occur} 
      with $l_i = \allcandidatesmultiset^{(i)}$ for some suitable instance.
      Note that the lower bound is tight for $i \in \{1,5\}$ and the upper
      for $i = 2$.}
    \label{fig:selection-sketch}
  \end{figure}
  Feasibility of $l \ce \allcandidatesmultiset^{(k)}$ follows immediately. Now let
  $\hat{l} \ce \min \{ l' \in \allcandidatesmultiset \mid l' > l \}$;
  we see that
   \[ \Lpieces(\hat{l})
        \overset{\eqref{eq:lpieces-by-occurs}}{\wwrel=}
      \Lpieces(l) - \allcandidatesmultiset(l)
        \overset{\eqref{eq:lpieces-bound-by-rank+occur}}{\wwrel\leq}
      k + \allcandidatesmultiset(l) - 1 - \allcandidatesmultiset(l)
        \wwrel=
      k - 1 \]
  and therefore $\hat{l}$ is infeasible.
  By the choice of $\hat{l}$ and monotonicity of \Feasible (cf.\ \wref{lem:monotonicity+step})
  we get that $l = \allcandidatesmultiset^{(k)}$ is indeed the largest feasible candidate;
  this concludes the proof via \wref{cor:optismaxfeasible} and \wref{lem:optatjump}.
\end{proof}

Of course, we want to select from a small candidate set such as those we saw above;
surely, selecting the $k$th~largest element from these is not correct, in general.
Also, not all restrictions may allow us to select because if we miss an 
$\nicefrac{L_i}{j}$ between two others, we may count wrong.
The relation carries over to \emph{admissible} restrictions with only small 
adaptions, though.

\begin{corollary}\label{cor:opt-is-k'th-cand}
  Let $\candidatesmultiset = \candidatesmultiset(I, f_l, f_u)$ 
 	be an admissible candidate multiset.
 	Then,
    \[ \optlength \wrel= \candidatesmultiset^{(k')} \]
  with $k' = k - \sum_{i \in I} \bigl[ f_l(i) - 1 \bigr]$.
\end{corollary}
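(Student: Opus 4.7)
The plan is to combine \wref{lem:opt-is-kth-cand} — which already gives $\optlength = \allcandidatesmultiset^{(k)}$ — with a careful accounting of which elements of $\allcandidatesmultiset$ get removed when restricting to the admissible sub-multiset $\candidatesmultiset = \candidatesmultiset(I, f_l, f_u)$. Once we know on which side of $\optlength$ the removed elements lie, we can simply track how the rank of $\optlength$ shifts.

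Every element of $\allcandidatesmultiset \setminus \candidatesmultiset$ is of the form $L_i/j$ and falls into exactly one of three disjoint groups: (a) $i \notin I$ and $j \geq 1$; (b) $i \in I$ with $1 \leq j \leq f_l(i)-1$; (c) $i \in I$ with $j \geq f_u(i)+1$. Using the admissibility conditions of \wref{lem:admissible-bounds} together with monotonicity (\wref{lem:monotonicity+step}) and \wref{cor:optismaxfeasible}, I classify each group with respect to $\optlength$: for (a), condition \ref{item:bound.indicesgood} makes $L_i$ feasible but suboptimal, so $L_i/j \leq L_i < \optlength$; for (c), condition \ref{item:bound.uppergood} ensures $L_i/f_u(i)$ is feasible, hence $L_i/j < L_i/f_u(i) \leq \optlength$; for (b), condition \ref{item:bound.lowergood} implies $L_i/(f_l(i){-}1)$ is infeasible and therefore strictly greater than $\optlength$, whence $L_i/j \geq L_i/(f_l(i){-}1) > \optlength$.

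So groups (a) and (c) remove only elements strictly below $\optlength$, whereas group (b) removes only elements strictly above $\optlength$. Counting group (b) gives exactly $\sum_{i \in I}(f_l(i)-1) = k - k'$ elements, and removing an element below $\optlength$ leaves the rank of $\optlength$ unchanged while removing one above it drops its rank by one. Hence the rank of $\optlength$ shifts from $k$ in $\allcandidatesmultiset$ to $k - (k - k') = k'$ in $\candidatesmultiset$. Since all inequalities against $\optlength$ in (a)–(c) are strict, $\optlength$ itself is never removed; combined with \wref{lem:optatjump} and admissibility this ensures $\optlength \in \candidatesmultiset$, so $\optlength = \candidatesmultiset^{(k')}$.

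The step that needs the most care is verifying strictness everywhere — in particular that no removed candidate happens to coincide with $\optlength$, which would break the rank bookkeeping. This is handled cleanly by the admissibility conditions: the "below" side is strict because feasibility of $L_i/f_u(i)$ (resp.\ of $L_i$ for $i \notin I$ after excluding the optimal case) together with $j > f_u(i)$ (resp.\ $j \geq 1$) produces strict inequalities, and the "above" side is strict because a feasible and an infeasible length can never be equal.
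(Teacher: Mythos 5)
Your proof is correct and follows essentially the same route as the paper's: both reduce to \wref{lem:opt-is-kth-cand} and then observe that the candidates discarded for $i \notin I$ or $j > f_u(i)$ lie strictly below $\optlength$ while exactly the $\sum_{i\in I}(f_l(i)-1)$ candidates with $j < f_l(i)$ lie strictly above it, shifting the rank from $k$ to $k'$. Your explicit three-way case split and the strictness check (ensuring no removed candidate coincides with $\optlength$, so multiplicities at $\optlength$ are preserved) is, if anything, slightly more careful than the paper's counting argument.
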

\begin{proof} 
  With multiset 
    \[ \mset{M} 
         \wwrel\ce 
       \biguplus_{i \,\in\, I}\ \{ \nicefrac{L_i}{j} \mid i \in I, j < f_l(i) \} , \] 
  we get by \wref{lem:admissible-bounds}~\ref{item:bound.uppergood} and 
  \wref{lem:monotonicity+step} that
    \[ \candidatesmultiset \cap [\optlength, \infty) 
         \wwrel=
       \bigl( \allcandidatesmultiset \cap [\optlength, \infty) \bigr)
         \setminus \mset{M}. \]
  In addition, we know from \wref{lem:opt-is-kth-cand} that
    \[ \bigl( \allcandidatesmultiset \cap [\optlength, \infty) \bigr)^{(k)}
         \wwrel=
       \optlength. \]
  Since $\mset{M}$ contains only infeasible candidates (cf.\ 
  \wref{lem:admissible-bounds}~\ref{item:bound.lowergood} and 
  \wref{lem:monotonicity+step}), we also have that
    \[ \mset{M} \subset (\optlength, \infty), \]
  and by definition
    \[ \mset{M} \cap \candidatesmultiset = \emptyset. \]
  The claim
    \[ \optlength \wwrel= \allcandidatesmultiset^{(k)}
                  \wwrel= \candidatesmultiset^{(k - |\mset{M}|)} \]
  follows by counting.
\end{proof}
Hence, we can use any of the candidate sets we have investigated above.
Instead of binary search we determine \optlength by selecting the $k'$th~largest 
element according to \wref{cor:opt-is-k'th-cand}.
Since selection takes only linear time we save a logarithmic factor compared 
to \alg{}.

We give the full algorithm for completeness; note that steps 
\ref{srcline:selalg.rankbound} and \ref{srcline:selalg.downsize} have not
changed compared to \alg{}.

\begin{algorithm}{$\coolalg{f_l, f_u}(\mset L, k) :$}
  \pdfbookmark[2]{Algorithm \coolalgascii}{thm:linear-alg}%
  \label{alg:selalg}
  ~
  \begin{enumerate}
    \item\label{srcline:selalg.rankbound}
      \algorithmsRankbound
    
    \item\label{srcline:selalg.downsize}
      \def\algorithmsLabelPrefix{selalg}%
      \algorithmsDownsize
      
    \item\label{srcline:selalg.candidates}
      Compute $\candidatesmultiset \ce \candidatesmultiset(\Ico, f_l, f_u)$
        as multiset.
        
    \item\label{srcline:selalg.search}
      Determine $k' \ce k - \sum_{i \in \Ico}\bigl[ f_l(i) - 1 \bigr]$.
      
    \item\label{srcline:selalg.return}
      Answer $\optlength \ce \candidatesmultiset^{(k')}$.
  \end{enumerate}
\end{algorithm}

\begin{theorem}\label{thm:linear-alg}%
  Let $(\Ico, f_l, f_u)$ be an admissible restriction where $f_l$ and $f_u$ can be 
  evaluated in time $\Oh(1)$.
  
  Then, \coolalg{f_l,f_u} solves \optproblem in time and space 
  $\Th(n + |\candidatesmultiset|)$.
\end{theorem}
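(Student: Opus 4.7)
The plan is to argue correctness and complexity along the same lines as in the proof of \wref{thm:algorithm}, but replacing binary search by linear-time selection and invoking the new selection-based characterization of \optlength.

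For correctness, I would first observe that Steps~\ref{srcline:selalg.rankbound} and~\ref{srcline:selalg.downsize} are literally the prefix of \alg{}, so the same reasoning used for \alg{} shows that we either return the optimum in Step~\ref{srcline:selalg.downsize} (when $\Lco$ is optimal) or proceed with $\Ico = I_{> \Lco}$ as in \wref{def:ourindexset}, which together with the assumed admissibility of $(\Ico, f_l, f_u)$ makes $\candidatesmultiset(\Ico, f_l, f_u)$ admissible. Correctness of the selection step then follows directly from \wref{cor:opt-is-k'th-cand}: the algorithm returns $\candidatesmultiset^{(k')}$ for exactly the $k'$ defined there, which equals \optlength.

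For runtime, I would go step by step. Step~\ref{srcline:selalg.rankbound} is $\Oh(1)$, and Step~\ref{srcline:selalg.downsize} is $\Th(n)$, as already argued in the proof of \wref{thm:algorithm} (using linear-time selection for $L^{(n')}$, an $\Oh(n)$ optimality check as in Step~2.2 there, and an $\Oh(n)$ traversal to build $\Ico$). Step~\ref{srcline:selalg.candidates} enumerates the $|\candidatesmultiset|$ candidates by iterating over $i \in \Ico$ and $j \in [f_l(i)..f_u(i)]$ and performing an $\Oh(1)$ division each, since $f_l, f_u$ are $\Oh(1)$-evaluable; this takes time $\Th(|\Ico| + |\candidatesmultiset|) = \Th(|\candidatesmultiset|)$, using that $|\Ico| \le |\candidatesmultiset|$ from admissibility $(f_u \ge f_l)$. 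Step~\ref{srcline:selalg.search} sums $f_l(i) - 1$ over $i \in \Ico$ in time $\Th(|\Ico|) \subseteq \Oh(n)$. Step~\ref{srcline:selalg.return} uses a linear-time rank-selection algorithm on the multiset of size $|\candidatesmultiset|$, hence $\Th(|\candidatesmultiset|)$. Summing yields $\Th(n + |\candidatesmultiset|)$.

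For space, the analysis essentially coincides with that of \wref{thm:algorithm}, minus sorting: storing $\mset L$ requires $\Th(n)$, the candidate multiset takes $\Th(|\candidatesmultiset|)$, and both the sum in Step~\ref{srcline:selalg.search} and the linear-time selection in Step~\ref{srcline:selalg.return} need only $\Oh(\log |\candidatesmultiset|)$ auxiliary space (or $\Oh(1)$, depending on the selection variant used). No step here is a serious obstacle; the only mild subtlety is to notice that $|\Ico| \le |\candidatesmultiset|$ so that the pre-processing cost $\Th(n)$ and the selection cost $\Th(|\candidatesmultiset|)$ indeed absorb all intermediate contributions, giving the claimed $\Th(n + |\candidatesmultiset|)$ bound.
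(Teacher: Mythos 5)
Your proposal is correct and follows essentially the same route as the paper's own proof: reuse the resource analysis of \wref{thm:algorithm} for the unchanged steps, replace sorting plus binary search by multiset generation plus linear-time selection, and derive correctness from \wref{cor:opt-is-k'th-cand}. If anything, your correctness argument is slightly cleaner, since you invoke the admissibility hypothesis of the theorem directly rather than the specific restriction of \wref{lem:smartbounds-admissible} as the paper does.
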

\begin{proof}
  Correctness is clear from \wref{lem:smartbounds-admissible} and 
  \wref{cor:opt-is-k'th-cand}.
    
  We borrow from the resource analysis of \wref{thm:algorithm} with the 
  following changes.
  \begin{description}[margin={ad \ref{srcline:selalg.search},\ref{srcline:selalg.return}:}]
    \item[ad \ref{srcline:selalg.candidates}:]
      We do not sort $\candidatesmultiset$, so creating the multiset takes
      only time $\Th(|\candidatesmultiset|)$; 
      the result takes up space $\Th(|\candidatesmultiset|)$, too, though.
        
    \item[ad \ref{srcline:selalg.search},\ref{srcline:selalg.return}:]
      Instead of binary search on \candidatesmultiset with repeated evaluation of 
      \Feasible, we just have to compute $k'$ (which clearly takes time $\Th(|\Ico|)$)
      and then select the $k'$th~largest element from \candidatesmultiset.
      This takes time $\Th(|\candidatesmultiset|)$ using e.\,g.\ the 
      median"=of"=medians algorithm~\cite{Blum1973}.
  \end{description}
  The resource requirements of the other steps remain unchanged, that is $\Th(n)$. 
  The bounds we claim in the corollary follow directly.
\end{proof}
Is has become clear now that decreasing the number of candidates is crucial
for solving \optproblem quickly, provided we do not drop \optlength along the way. 
We now endeavor to do so by choosing better admissible bounding functions.

\section{Reducing the Number of Candidates}
\label{sec:candidate-sets}

We can decrease the number of candidates significantly by observing the following.
Whenever we cut $L^{(i)}$ (which is the $i$th~largest length) into $j$ pieces 
of length $\nicefrac{L^{(i)}}{j}$ each, 
we also get at least $j$ pieces of the same length from each of the longer sticks.
In total, this makes for at least $i \cdot j$ pieces of length $\nicefrac{L^{(i)}}{j}$;
see \wref{fig:visual-k/i} for a visualization.
By rearranging the inequality $k \geq i \cdot j$, we obtain a new admissible
bound on $j$. For the algorithm, we have to sort $\Ico$, though, so that $L_i = L^{(i)}$.

\begin{examplectd}{ex:nonaivealg}
  For $\mset{L}_{\mathrm{ex}}$ and $k=9$, we get
  \def\go#1{{\color{black!50}#1}}
  \begin{align*}
    \candidatesmultiset(\Ico,1,\lceil \nicefrac{k}{i} \rceil) = 
        \biggl\{ &\frac81,\frac82,\frac83,\frac84,\frac85,\frac86,\frac87,\frac88,\frac89,
               \;\frac71,\frac72,\frac73,\frac74,\frac75,
               \;\frac61,\frac62,\go{\frac63}
        \biggr\},
  \end{align*}
  that is 17 candidates (16 distinct ones); compare to $|\candidatesmultiset(\Ico,1,k)| = 36$). 
\end{examplectd}

\begin{figure}
  \begin{center}
    \begin{tikzpicture}[x=7mm,y=5mm,auto,scale=1.2,transform shape]
      \fill[black!25]
        \foreach \y in {1,2,5,6} {%
          \foreach \x in {1,2.5,5,6.5} {%
             (\x,\y) rectangle +(1,1)
          }
        };
      
      \draw[thin,black!25,fill=black!10]
        \foreach \x/\y in {1/7, 8/1, 8/2, 8/5} {%
          (\x, \y) rectangle +(1,1)
        };
        
      \draw[dashed,black!85]
        \foreach \y in {1,2,4,5,6} {%
           (1,\y+1) -- (7.5,\y+1)
        };
        
      \foreach \i/\x/\h in {1/1/7.5, 2/2.5/6.8, {i-1}/5/6.5, i/6.5/6, {i+1}/8/5.5} {%
        \draw (\x,1) rectangle +(1,\h);
        \node[scale=0.85] at (\x + 0.5,\h + 1.5) {$L^{(\i)}$};
      }
      
      \foreach \x in {4.25,9.65} {%
        \node at (\x,4) {$\,\dots$};
      }      
      \foreach \x in {1.5, 3, 5.5, 7} {%
        \node[scale=0.85,black!85] at (\x, 4.2) {$\vdots$};
      }
      \node[scale=0.85,black!25] at (8.5, 4.2) {$\vdots$};
      
      \draw[decoration={brace},decorate]
        (0.75,1) -- +(0,6) 
          node[midway,scale=0.85,outer xsep=1mm] {$\frac{k}{i}$};
      \draw[decoration={brace,mirror},decorate]
        (1,0.75) -- +(6.5,0) 
          node[midway,swap,scale=0.85,outer ysep=1mm] {$i$};
    \end{tikzpicture}
  \end{center}
  \caption{When considering cut lengths $\nicefrac{L^{(i)}}{j}$, no $j$ larger
    than $\lceil \nicefrac{k}{i} \rceil$ is relevant. 
    The sketch shows a cutting with $L^{(i)}$ and $j = \nicefrac{k}{i}$.
    Note how we have $k$ maximal pieces for sure (dark); 
    there may be many more (light).}
  \label{fig:visual-k/i}
\end{figure}

\begin{lemma}%
  \pdfbookmark[2]{Linearithmic Candidate Set}{lem:linearithmic-candidates}%
  \label{lem:linearithmic-candidates}%
  Assume that $\Lco \neq \optlength$ and $\Ico$ is sorted w.\,r.\,t.\ decreasing lengths.
  
  Then, $\candidatesmultiset(\Ico,1,\lceil \nicefrac{k}{i} \rceil)$ is admissible. \\
  Furthermore,
  $|\candidatesmultiset(\Ico,1,\lceil \nicefrac{k}{i} \rceil)| 
    \in \Th\bigl( k \cdot \log(\min(k,n)) \bigr)$ 
  in the worst case.
  
  %
\end{lemma}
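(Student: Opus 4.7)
The statement has two parts, and I would prove them separately: first admissibility of the candidate multiset, and then the $\Th(k \log \min(k,n))$ worst-case bound on its size.

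For admissibility, I would check the three conditions of \wref{lem:admissible-bounds} in turn. Condition \ref{item:bound.lowergood} is vacuous because $f_l \equiv 1$. Condition \ref{item:bound.indicesgood} is identical to the corresponding case in the proof of \wref{lem:quadr-candidates}: since $\Ico = I_{>\Lco}$ (and $\Lco \neq \optlength$ by assumption), any $i' \notin \Ico$ satisfies $L_{i'} \leq \Lco$, which is suboptimal by \wref{cor:optismaxfeasible} and monotonicity (\wref{lem:monotonicity+step}). The only genuinely new work is \ref{item:bound.uppergood}, where the sorting assumption on $\Ico$ plays its central role: for any $i \in \Ico$, the $i$ sticks with indices $1,\dots,i$ all satisfy $L_j \geq L_i$, so each contributes at least $\lfloor L_j / (L_i/\lceil k/i \rceil) \rfloor \geq \lceil k/i \rceil$ maximal pieces of length $L_i/\lceil k/i \rceil$. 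Summing yields $\Lpieces(L_i/\lceil k/i \rceil) \geq i \cdot \lceil k/i \rceil \geq k$, which is exactly feasibility.

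For the cardinality, the definition directly gives
\[
  |\candidatesmultiset(\Ico,1,\lceil k/i \rceil)| \;=\; \sum_{i \in \Ico} \lceil k/i \rceil.
\]
Bounding $\lceil k/i \rceil$ between $k/i$ and $k/i + 1$ and summing yields $k \cdot \harm{|\Ico|} \le |\candidatesmultiset| \le k \cdot \harm{|\Ico|} + |\Ico|$. Since $|\Ico| \leq \min(k-1,n)$ in all instances and this is attained in the worst case (when the input lengths are pairwise distinct, so that every stick strictly longer than $\Lco$ lies in $\Ico$), the standard estimate $\harm{N} = \Th(\log N)$ gives the claimed $\Th(k \log \min(k,n))$ bound. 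The additive $|\Ico| \le k$ term is absorbed since $|\Ico| \log |\Ico|$ dominates it in all but trivially small parameter regimes.

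The main subtlety I expect is purely notational: the argument for \ref{item:bound.uppergood} uses that the first $i$ sticks \emph{in the sorted ordering of $\Ico$} are the $i$ largest, so one has to be careful that the labels used in the definition of $\candidatesmultiset$ match the sorted enumeration (this is exactly what the hypothesis on $\Ico$ ensures). Beyond that, the verification is mechanical: a monotonicity argument for feasibility, an identity for the size, and a harmonic-sum estimate.
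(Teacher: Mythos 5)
Your proposal is correct and follows essentially the same route as the paper's proof: admissibility via the three conditions of \wref{lem:admissible-bounds}, with the sortedness of $\Ico$ used to show $\Lpieces(\nicefrac{L_i}{\lceil k/i\rceil}) \geq i \cdot \lceil k/i\rceil \geq k$, and the size bound via $\sum_{i}\lceil k/i\rceil$ sandwiched by harmonic sums. The only cosmetic nit is that the additive term is absorbed because $k\cdot\harm{|\Ico|} \geq k \geq |\Ico|$ (not because ``$|\Ico|\log|\Ico|$'' dominates), but this does not affect the argument.
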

\begin{proof}
  Again, we start by showing that $(\Ico, 1, \lceil \nicefrac{k}{i} \rceil)$ 
  is an admissible restriction.
  \begin{description}[margin={ad \ref{item:bound.lowergood}, \ref{item:bound.indicesgood}:},itemsep=1ex]
    \item[ad \ref{item:bound.lowergood}, \ref{item:bound.indicesgood}:]%
      Similar to the proof of \wref{lem:quadr-candidates}.
      
    \item[ad \ref{item:bound.uppergood}:]%
      Because $\Ico$ is sorted, we have $L_i = L^{(i)}$ and $L_{i'} \geq L_i$ for
      $i' \leq i$. Therefore, we get for all the 
      $l = \nicefrac{L_i}{f_u(i)} = L_i \cdot \lceil \nicefrac{k}{i} \rceil^{-1}$
      with $i \in \Ico$ that
        \[            \Lpieces(L) 
           \wwrel=    \sum_{i' = 1}^n \Biggl\lfloor \frac{L_{i'}}{l} \Biggr\rfloor 
           \wwrel\geq \sum_{i' = 1}^i \Biggl\lfloor \frac{L_{i}}{l} \Biggr\rfloor 
           \wwrel=    \sum_{i' = 1}^i \Biggl\lfloor \biggl\lceil 
                                        \frac{k}{i} 
                                      \biggr\rceil \Biggr\rfloor 
           \wwrel\geq k. \]
  \end{description}
  This concludes the proof of the first claim.
  
  For the size bound, let for short 
  $\candidatesmultiset \ce \candidatesmultiset(\Ico,1,\lceil \nicefrac{k}{i} \rceil)$.
  Clearly, $|\candidatesmultiset| = \sum_{i \in \Ico} \lceil \nicefrac{k}{i} \rceil$
  (cf.\ \wref{def:candidates}).  
  With $|\Ico| = n' - 1 = \min(n,k-1)$ in the worst"=case 
  (cf.\ the proof of \wref{lem:quadr-candidates}), 
  the $\Th(k \log n')$ bound on $|\candidatesmultiset|$ follows from
    \[ |\candidatesmultiset|
         \wwrel=    \sum_{i=1}^{n'-1} \biggl\lceil \frac{k}{i} \biggr\rceil
         \wwrel\leq n' + \sum_{i=1}^{n'} \frac{k}{i}
         \wwrel=    n' + k \cdot \harm{n'} 
         \wwrel\in  \Th(k \log n') \]
  and
    \[ |\candidatesmultiset|
         \wwrel=    \sum_{i=1}^{n'-1} \biggl\lceil \frac{k}{i} \biggr\rceil
         \wwrel\geq \sum_{i=1}^{n'-1} \frac{k}{i}
         \wwrel=    k \cdot \harm{n'-1}  
         \wwrel\in  \Th(k \log n') \]  
  with the well"=known asymptotic $\harm{k} \sim \ln k$ 
  of the harmonic numbers~\cite[eq.~(6.66)]{ConcreteMathematics}.
\end{proof}

Combining \wref{thm:linear-alg} and \wref{lem:linearithmic-candidates} we have 
obtained an algorithm that takes time and space $\Th\bigl(n + k \cdot \log(\min(k,n))\bigr)$.
This is already quite efficient.
By putting in some more work, however, we can save the last logarithmic factor
that separates us from linear time and space.

\subsection{Sandwich Bounds}
\label{sec:linear-candidate-set}

\wref{lem:admissible-bounds} gives us some idea about what criteria we can use
for restricting the set of lengths we investigate. We will now try to match these
criteria as exactly as possible, deriving an interval 
$[\underline{l}, \overline{l}] \subseteq \Q_{>0}$ that includes \optlength and
is as small as possible; from these, we can infer almost as tight bounds $(f_l, f_u)$.

Assume we have some length $L < \optlength$ and consider only lengths $l > L$.
We have seen in \wref{lem:feasible-computation} that we can then restrict ourselves
to lengths from $I_{>L}$ when computing $\Lpieces(l)$. Now, from the definition 
of \Lpieces it is clear that we can sandwich $\Lpieces(l)$ by 
  \[            \sum_{i \,\in\, I_{>L}} \frac{L_i}{l} - 1
     \wwrel<    \Lpieces(l) 
     \wwrel\leq \sum_{i \,\in\, I_{>L}} \frac{L_i}{l}  \]
for $l > L$. We denote for short $\Sigma_{I} \ce \sum_{i \in I} L_i$ for any 
$I \subseteq [1..n]$; rearranging terms, we can thus express these bounds more easily,
both with respect to notational and computational effort. We get
\begin{equation}\label{eq:convenient-lpieces-bounds}%
             \frac{\Sigma_{I_{>L}}}{l} - |I_{>L}| 
  \wwrel<    \Lpieces(l)
  \wwrel\leq \frac{\Sigma_{I_{>L}}}{l}
\end{equation}
for all $l > L$. Note that $L=0$ is a valid choice, as then simply $I_{>L} = [1..n]$.

\begin{figure}
	\pgfplotstableread{
	l	1/l	m(l)	c(l)	
	1.	1.	34.	18.	
	1.14286	0.875	18.	17.	
	1.16667	0.857143	17.	16.	
	1.2	0.833333	16.	15.	
	1.33333	0.75	15.	14.	
	1.4	0.714286	14.	13.	
	1.5	0.666667	13.	12.	
	1.6	0.625	12.	11.	
	1.75	0.571429	11.	10.	
	2.	0.5	10.	8.	
	2.33333	0.428571	8.	7.	
	2.66667	0.375	7.	6.	
	3.	0.333333	6.	5.	
	3.5	0.285714	5.	4.	
	4.	0.25	4.	3.	
	6.	0.166667	3.	2.	
	7.	0.142857	2.	1.	
	8.	0.125	1.	0.	
}\plotpointsLpiecesMfourXtwo
\plaincenter{%
\begin{tikzpicture}[
	  remember picture,
	  every node/.style={font={\footnotesize}}
  ]
  \begin{axis}[%
	  xmax={1/1.3},
	  xmin={1/9},
	  xlabel={$1/l$},
	  ylabel=$\Lpieces(l)$,
  ]
	  \begin{scope}
	  \clip (axis cs:.112,-1) rectangle (axis cs:.768,17);
	  \fill[black!15] (axis cs:0,0) -- (axis cs:1,21*1) -- (axis cs:1,21-3) -- (axis cs:0,-3) -- cycle;
	  \end{scope}
	  \draw[thick,orange!80!black] (axis cs:0,0) -- (axis cs:1,21) 
	    coordinate[pos=0.775] (legend-upper);
	  \draw[thick,blue!50!black] (axis cs:0,-3) -- (axis cs:1,21*1-3) 
	    coordinate[pos=0.775] (legend-lower);

	  \addplot [
	      thick,
	      jump mark right,
	      mark=*,
	  ] table [x=1/l,y=m(l)] {\plotpointsLpiecesMfourXtwo};
	  \addplot[only marks,mark=*,mark options={fill=white},opacity=0] 
		   table [x=1/l,y=c(l)] {\plotpointsLpiecesMfourXtwo};
	
	  \draw[red!70!black] (axis cs:0,9) -- (axis cs:1,9)
	    coordinate[pos=0.775] (legend-k);

	  \draw[thin, densely dotted] (axis cs:1/2,10) -- (axis cs:1/2,-1) ;
	
	  \fill[opacity=0.2,orange!60!black] (axis cs:0,20) rectangle (axis cs:3/7,-10) ;
	  \draw[orange!60!black,thick,densely dotted] (axis cs:3/7,-3) -- (axis cs:3/7,20) 
		  	node[pos=.175,anchor=east,inner sep=1pt] {$\nicefrac 1{\overline l}$}
		  (0.27,13.5) node {\scriptsize infeasible};

	  \fill[opacity=0.2,blue!50!black] (axis cs:1,20) rectangle (axis cs:4/7,-3) ;
	  \draw[blue!30!black,thick,densely dotted] (axis cs:4/7,-3) -- (axis cs:4/7,20) 
	  		node[pos=.175,anchor=west,inner sep=1pt] {$\nicefrac 1{\underline l}$}
		  (0.67,4.5) node {\scriptsize dominated};
  \end{axis}
  \node[orange!80!black,anchor=west] at (legend-upper) {$\Sigma_{\Ico}/l$};
  \node[blue!50!black,anchor=west] at (legend-lower) {$\Sigma_{\Ico}/l-|\Ico|$};
  \node[red!70!black,anchor=west] at (legend-k) {$k=9$};
\end{tikzpicture}%
}
	\caption{%
		The number of maximal pieces $\Lpieces(l)$ in the reciprocal of cut length $l$
		for $(\mset{L}_{\mathrm{ex}}, 9)$ as defined in \wref{ex:nonaivealg}.
		Note how we can exclude all but three candidates (the filled circles) 
		in a narrow corridor around $\nicefrac{1}{\optlength} = 0.5$, 
		defined by the points at which the bounds from \wref{eq:convenient-lpieces-bounds}
		attain $k=9$, namely $\underline l = 1.75$ and $\overline l = 2.\overline 3$.
	}
	\label{fig:corridor-bounds-example}
\end{figure}

Rearranging these inequalities ``around'' $\Lpieces(l) = k$
yields bounds on \optlength, which we can translate into bounds $(f_l, f_u)$ on 
$j$ (cf.\ \wref{def:candidates}). We lose some precision because we round to integer
bounds but that adds at most a linear number of candidates.
A small technical hurdle is to ensure that both bounds are greater than our chosen
$L$ so that we can apply the sandwich bounds \eqref{eq:convenient-lpieces-bounds} 
in our proof.
\begin{lemma}\label{lem:smartbounds-admissible}%
  Let $\Lco$ and $\Ico$ be defined as in \wref{def:ourindexset}, and
  \begin{itemize}
    \item $\displaystyle 
      \underline{l} \ce \max \biggl\{ \Lco, \frac{\Sigma_{\Ico}}{k + |\Ico|} \biggr\}$ 
      and
    \item $\displaystyle \overline{l}  \ce \frac{\Sigma_{\Ico}}{k}$.
  \end{itemize}
  Then, $\bigl( \Ico, \allpieces(L_i,\overline{l}), 
                     \allpieces(L_i,\underline{l}) \bigr)$ 
  is admissible.
\end{lemma}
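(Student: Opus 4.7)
The plan is to verify the three admissibility conditions of \wref{lem:admissible-bounds} for the restriction $(\Ico, f_l, f_u)$ with $f_l(i) = \allpieces(L_i, \overline{l})$ and $f_u(i) = \allpieces(L_i, \underline{l})$, relying throughout on the sandwich bounds \eqref{eq:convenient-lpieces-bounds} applied with $L \ce \Lco$ (so that $I_{>L} = \Ico$). The two facts to extract from them are that any length strictly greater than $\overline{l}$ must be infeasible, and that $\underline{l}$ itself is feasible. As groundwork, I observe that $\Lco$ is always feasible: for $k \leq n$, cutting the $k$ sticks of length at least $L^{(k)}$ at length $\Lco = L^{(k)}$ yields $k$ maximal pieces; for $k > n$ we have $\Lco = 0$ and $\Ico = [1..n]$.

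For condition \ref{item:bound.indicesgood}: when $k > n$ there are no indices outside $\Ico$ and nothing to check. Otherwise every $i' \notin \Ico$ satisfies $L_{i'} \leq \Lco$, which is feasible by \wref{lem:monotonicity+step}, and the hypothesis $\Lco \neq \optlength$ combined with \wref{cor:optismaxfeasible} gives $\Lco < \optlength$, so $L_{i'} < \optlength$ is suboptimal.

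For condition \ref{item:bound.lowergood}: fix $i \in \Ico$ with $f_l(i) \geq 2$ (the case $f_l(i) = 1$ is immediate), and set $j \ce f_l(i) - 1$; by definition of the ceiling, $L_i/j > \overline{l}$. To apply the upper sandwich bound at $l = L_i/j$ I need $\overline{l} > \Lco$; this follows from the same upper bound applied at $\optlength > \Lco$, which yields $k \leq \Lpieces(\optlength) \leq \Sigma_{\Ico}/\optlength$, and hence $\optlength \leq \Sigma_{\Ico}/k = \overline{l}$. The upper bound then gives $\Lpieces(L_i/j) \leq \Sigma_{\Ico}/(L_i/j) < \Sigma_{\Ico}/\overline{l} = k$, so $L_i/j$ is infeasible.

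For condition \ref{item:bound.uppergood}: by construction $L_i/f_u(i) \leq \underline{l}$, so by monotonicity of \Feasible it suffices to show $\underline{l}$ is feasible. Either $\underline{l} = \Lco$, already known to be feasible, or $\underline{l} = \Sigma_{\Ico}/(k+|\Ico|) > \Lco$, in which case the lower sandwich bound yields $\Lpieces(\underline{l}) > \Sigma_{\Ico}/\underline{l} - |\Ico| = k$, so $\underline{l}$ is feasible. The only real technicality throughout is tracking the condition $l > \Lco$ required to invoke the sandwich bounds, which is why the short detour through $\optlength \leq \overline{l}$ is needed before dispatching \ref{item:bound.lowergood}; all remaining work is routine bookkeeping.
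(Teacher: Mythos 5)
Your proof is correct and follows essentially the same route as the paper's: verify the three conditions of \wref{lem:admissible-bounds} using the sandwich bounds \eqref{eq:convenient-lpieces-bounds}, with feasibility of $\underline{l}$ handled by the same two-case split and infeasibility beyond $\overline{l}$ giving condition \ref{item:bound.lowergood}. The only cosmetic difference is that you justify $\overline{l} > \Lco$ by explicitly deriving $\optlength \le \overline{l}$ from the upper sandwich bound at $\optlength$, whereas the paper packages the same fact as $\Sigma_{\Ico} \ge k(\Lco+\delta)$ for some $\delta>0$.
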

\begin{proof}
  First, we determine what we know about our length bounds.
  Recall that $\Ico = I_{> \Lco} \neq \emptyset$ and $\Lco$ is not optimal.
  
  We see that $\underline{l}$ is feasible by calculating
  \begin{equation}\label{eq:lower-l-feasible}%
      \Lpieces(\underline{l})\ 
      \begin{cases} 
        \wwrel{\overset{\eqref{eq:convenient-lpieces-bounds}}{>}}
          \frac{\Sigma_{\Ico}}{\underline{l}} - |\Ico|
        \wwrel=
          \frac{\Sigma_{\Ico}}{\frac{\Sigma_{\Ico}}{k + |\Ico|}} - |\Ico|
        \wwrel=
          k, &\underline{l} > \Lco, \\
        
        \wwrel=    \Lpieces(\Lco) 
        \wwrel\geq k, &\underline{l} = \Lco > 0,
      \end{cases}
  \end{equation}
  using in the second case that $\Lco$ is feasible. For the upper bound, we
  first note that because $\Lco$ is not optimal, there is some $\delta > 0$ with
    \[ \Sigma_{\Ico} \wwrel\geq k(\Lco + \delta) \wwrel> k\Lco, \] 
  from which we get by rearranging that $\overline{l} > \Lco$.
  Therefore, we can bound
  \begin{equation}\label{eq:upper-l-feasibility-border}%
        \Lpieces(\overline{l} + \varepsilon)
    \wwrel{\overset{\eqref{eq:convenient-lpieces-bounds}}{\leq}}
        \frac{\Sigma_{\Ico}}{\overline{l} + \varepsilon}
    \wwrel<
        \frac{\Sigma_{\Ico}}{\overline{l}}
    \wwrel=
        \frac{\Sigma_{\Ico}}{\frac{\Sigma_{\Ico}}{k}}
    \wwrel=
        k
  \end{equation}
  for any $\varepsilon > 0$, that is any length larger than $\overline{l}$ is infeasible.
  Note in particular that, in every case, $\overline{l} > \underline{l}$ so we always
  have a non"=empty interval to work with.

  We now show the conditions of \wref{lem:admissible-bounds} one by one.
  \begin{description}[margin={ad \ref{item:bound.indicesgood}},itemsep=1ex]
    \item[ad \ref{item:bound.lowergood}]
      Let $i \in \Ico$.
      If $\allpieces(L_i, \overline{l}) = 1$ the condition is trivially fulfilled.
      In the other case, we calculate
        \[ l \wwrel\ce \frac{L_i}{\allpieces(L_i,\overline{l}) - 1}
             \wwrel=   \frac{L_i}{\lceil \nicefrac{L_i}{\overline{l}} \rceil - 1}
             \wwrel>   \frac{L_i}{\nicefrac{L_i}{\overline{l}}} 
             \wwrel=   \overline{l} \]
        and therewith $\Lpieces(l) < k$ by \eqref{eq:upper-l-feasibility-border}.
        
    \item[ad \ref{item:bound.uppergood}]
      Let $i \in \Ico$ again. We calculate
        \[ l \wwrel\ce  \frac{L_i}{\allpieces(L_i, \underline{l})}
             \wwrel=    \frac{L_i}{\lceil \nicefrac{L_i}{\underline{l}} \rceil}
             \wwrel\leq \frac{L_i}{\nicefrac{L_i}{\underline{l}}}
             \wwrel=    \underline{l} \]
      which implies by \wref{lem:monotonicity+step} that
        \[ \Lpieces(l) \wwrel\geq \Lpieces(\underline{l})
                       \wwrel{\overset{\eqref{eq:lower-l-feasible}}{\geq}} k .\]
                       
    \item[ad \ref{item:bound.indicesgood}] 
      See the proof of \wref{lem:quadr-candidates}.
  \end{description}
\end{proof}

\begin{examplectd}{ex:nonaivealg}
  For $\mset{L}_{\mathrm{ex}}$ and $k=9$, we get
  \def\go#1{{\color{black!50}#1}}%
   \[ \candidatesmultiset\bigl(\Ico,\allpieces(L_i,\overline{l}), 
                                   \allpieces(L_i,\underline{l})\bigr) = 
        \biggl\{ \frac{8}{4}, \frac{8}{5},\;
                \frac{7}{4}, \frac{7}{3},\;
                \go{\frac{6}{3}}, \frac{6}{4} \biggr\} , \]      
  that is six candidates (five distinct ones);
  compare to $|\candidatesmultiset(\Ico,1,\lceil \nicefrac{k}{i} \rceil)| = 17$ and 
  $|\candidatesmultiset(\Ico,1,k)| = 36$. 
  See \wref{fig:corridor-bounds-example} for a visualization of the effect
  our bounds have on the candidate set; note that we keep some additional
  candidates smaller than $\underline{l}$.
\end{examplectd}

We see in this example that the bounds from \wref{lem:smartbounds-admissible} 
are not as tight as could be; $\candidatesmultiset(\Ico, \allpieces(L_i,\overline{l}), 
\allpieces(L_i,\underline{l})) \cap [\underline{l}, \overline{l}]$ can be
properly smaller (but not by more than one element per $L_i$), 
and since $\optlength \in [\underline{l}, \overline{l}]$ it is still a valid 
candidate set.

We stick with the slightly larger set here for conciseness of the proofs,
but remark that omitting lengths outside the interval $[\underline l,\overline l]$
is safe.
We have defined admissibility in a way that is \emph{local} to each $L_i$~-- 
we require to envelop \optlength for each length in isolation: 
in particular, condition~\ref{item:bound.uppergood} ensures we include 
at least one $j$ for every $L_i$, so that $\nicefrac{L_i}j$ is feasible.
We thus have no way to express \emph{global} length bounds $[\underline l,\overline l]$
in this framework: although lengths smaller than $\underline l$ are dominated,
the upper bound $f_u \!\rel= i \mapsto \lfloor \nicefrac{L_i}{\underline{l}} \rfloor$ 
is not admissible in the sense of \wref{lem:admissible-bounds} 
because it might for some sticks not add a single feasible length.

Nevertheless, we have obtained yet another admissible restriction and, as it turns
out, it is good enough to achieve a linear candidate set. Only some 
combinatorics stand between us and our next corollary.

\begin{lemma}%
  \pdfbookmark[2]{Linear Candidate Set}{lem:linear-candidates}%
  \label{lem:linear-candidates}%
  $|\candidatesmultiset(\Ico,\allpieces(L_i,\overline{l}), \allpieces(L_i,\underline{l}))| 
    \in \Th\bigl( \min(k,n) \bigr)$ 
  in the worst case.
\end{lemma}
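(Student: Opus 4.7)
The plan is to bound the multiset size by a telescoping count argument and then exploit the explicit closed forms of $\underline{l}$ and $\overline{l}$.

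First, I would unfold the definition of the candidate multiset:
\[
  |\candidatesmultiset|
  \wwrel= \sum_{i \in \Ico} \bigl[\allpieces(L_i,\underline{l}) - \allpieces(L_i,\overline{l}) + 1\bigr]
  \wwrel= \sum_{i \in \Ico} \Bigl[ \bigl\lceil \tfrac{L_i}{\underline{l}} \bigr\rceil - \bigl\lceil \tfrac{L_i}{\overline{l}} \bigr\rceil + 1 \Bigr].
\]
Using $\lceil x\rceil \le x+1$ on the first ceiling and $\lceil y\rceil \ge y$ on the second, each summand is bounded by $L_i\bigl(\tfrac1{\underline{l}} - \tfrac1{\overline{l}}\bigr) + 2$, so
\[
  |\candidatesmultiset|
  \wwrel\le \Sigma_{\Ico}\Bigl(\tfrac1{\underline{l}} - \tfrac1{\overline{l}}\Bigr) + 2|\Ico|.
\]

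Next I would translate the sandwich bounds into a bound on $\tfrac1{\underline{l}}-\tfrac1{\overline{l}}$. Regardless of which argument of the $\max$ in $\underline{l}$ dominates, by construction $\underline{l} \ge \nicefrac{\Sigma_{\Ico}}{(k+|\Ico|)}$, hence $\tfrac{1}{\underline{l}} \le \tfrac{k+|\Ico|}{\Sigma_{\Ico}}$; and $\tfrac1{\overline{l}} = \tfrac{k}{\Sigma_{\Ico}}$ directly. Subtracting gives $\tfrac1{\underline{l}} - \tfrac1{\overline{l}} \le \tfrac{|\Ico|}{\Sigma_{\Ico}}$. Substituting into the previous estimate, the $\Sigma_{\Ico}$ cancels and yields the clean bound $|\candidatesmultiset| \le 3|\Ico|$.

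Since $|\Ico| \le \min(k-1,n)$ by \wref{def:ourindexset} (with equality attainable, cf.\ the argument in the proof of \wref{lem:quadr-candidates}), this gives the $\Oh(\min(k,n))$ upper half. For the matching lower bound in the worst case, I would observe that each summand in the definition of $|\candidatesmultiset|$ is at least $1$ (since $\underline{l}\le\overline{l}$ forces $\allpieces(L_i,\underline{l})\ge\allpieces(L_i,\overline{l})$), so $|\candidatesmultiset|\ge|\Ico|$, which reaches $\min(k-1,n)$ on the same pairwise-distinct instances used earlier.

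The only step requiring a bit of care is Step~2's handling of the $\max$ in the definition of $\underline{l}$: one might worry that in the case $\underline{l} = \Lco$ the bound $\tfrac1{\underline{l}} \le \tfrac{k+|\Ico|}{\Sigma_{\Ico}}$ needs a separate argument, but it comes for free from the very definition $\underline{l} = \max\{\Lco,\nicefrac{\Sigma_{\Ico}}{(k+|\Ico|)}\}$. Once this observation is in place, the rest is routine algebra; no combinatorial identities or deeper structural lemmas from the paper are needed beyond what is already established.
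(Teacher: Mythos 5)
Your proposal is correct and follows essentially the same route as the paper's proof: expand $|\candidatesmultiset|$ as a sum of $\allpieces(L_i,\underline{l})-\allpieces(L_i,\overline{l})+1$, bound the ceilings by $\lceil x\rceil\le x+1$ and $\lceil y\rceil\ge y$, and use the closed forms of $\underline{l}$ and $\overline{l}$ to cancel $\Sigma_{\Ico}$ and arrive at $3|\Ico|$, with the trivial $|\candidatesmultiset|\ge|\Ico|$ lower bound and $|\Ico|=\min(k-1,n)$ in the worst case. Your explicit handling of the $\max$ in the definition of $\underline{l}$ (replacing the paper's equality by the inequality $\nicefrac1{\underline{l}}\le\nicefrac{(k+|\Ico|)}{\Sigma_{\Ico}}$) is a small but welcome refinement of the same argument.
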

\begin{proof}
  Recall that $|\Ico| = \min(k-1,n)$ in the worst case 
  (cf.\ the proof of \wref{lem:quadr-candidates}).
  The upper bound on $|\candidatesmultiset|$ then follows from the following calculation:
  \begin{align*}
    |\candidatesmultiset|
       &\wwrel=    \sum_{i \,\in\, \Ico} \bigl[\, \allpieces(L_i, \underline{l})
                                     - \allpieces(L_i, \overline{l})
                                     + 1 \,\bigr] \\
       &\wwrel=    |\Ico| + \sum_{i \,\in\, \Ico} \biggl\lceil \frac{L_i}{\underline{l}} \biggr\rceil 
                   - \sum_{i \,\in\, \Ico} \biggl\lceil \frac{L_i}{\overline{l}} \biggr\rceil \\
       &\wwrel\leq |\Ico| + \sum_{i \,\in\, \Ico} \biggl[ \frac{L_i}{\underline{l}} + 1 \biggr]
                   - \sum_{i \,\in\, \Ico} \frac{L_i}{\overline{l}} \\
       &\wwrel=    |\Ico| + \Sigma_{\Ico} \cdot \frac{k + |\Ico|}{\Sigma_{\Ico}} + |\Ico|
                   - \Sigma_{\Ico} \cdot \frac{k}{\Sigma_{\Ico}} \\
       &\wwrel=    3 \cdot |\Ico| .
  \end{align*}
  A similar calculation shows the lower bound $|\candidatesmultiset| \geq |\Ico|$.
\end{proof}

If we use $f_u \!\rel= i \mapsto \lfloor \nicefrac{L_i}{\underline{l}} \rfloor$,
the candidate set is even smaller, namely $|\candidatesmultiset| \le 2 |\Ico|$.

\section{Conclusion}
\label{sec:conclusion}
We have given a formal definition of \optproblem, derived means to restrict
the search for an optimal solution to a small, discrete space of candidates,
and developed algorithms that perform this search efficiently. \wref{tab:runtime-table}
summarizes the asymptotic runtimes of the combinations of candidate space
and algorithm.
\begin{table}
  \begin{center}
    \begin{tabular}{lcc}
      \toprule
       $(f_l,f_u)$ & $\alg{f_l,f_u}$ & $\coolalg{f_l,f_u}$ \\
      \midrule
       $(1,k)$ & 
         $\Th(k n \log k)$ &
         $\Th(k n)$ \\[1ex]
       $\bigl( 1, \lceil \nicefrac{k}{i} \rceil \bigr)$\footnotemark & 
         $\Th(k \log(k) \log(n))$ &
         $\Th(k \log n)$ \\[1ex]
       $\bigl( \allpieces(L_i,\overline{l}), \allpieces(L_i,\underline{l}) \bigr)$ & 
         $\Th(n \log n)$ &
         $\Th(n)$\\
      \bottomrule
    \end{tabular}
  \end{center}
  \caption{Assuming $k \geq n$, the table shows the worst-case runtime bounds 
    shown above for the combinations of algorithm and bounding functions.}
  \label{tab:runtime-table}
\end{table}
\footnotetext{%
  The additional time $\Th(|\Ico| \log |\Ico|)$ necessary for sorting 
  $\Ico$ as required by \wref{lem:linearithmic-candidates} is
  always dominated by generating \candidatesmultiset.
}

All in all, we have shown the following complexity bounds on our problem.
\begin{corollary}\label{cor:optproblem-complexity-bounds}
  \optproblem can be solved in time and space $\Oh(n)$.
\end{corollary}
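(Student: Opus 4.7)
The plan is to obtain the result by instantiating \coolalg{} with the sandwich bounds of \wref{lem:smartbounds-admissible} and invoking \wref{thm:linear-alg} and \wref{lem:linear-candidates}. Concretely, I would argue that running $\coolalg{\allpieces(L_i,\overline{l}),\allpieces(L_i,\underline{l})}$ on $(\mset L, k)$ solves \optproblem in the claimed bounds.

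First, I would check that the preconditions of \wref{thm:linear-alg} are met. Admissibility of the restriction $\bigl(\Ico, \allpieces(L_i,\overline{l}), \allpieces(L_i,\underline{l})\bigr)$ is exactly \wref{lem:smartbounds-admissible}, provided we are in the case where $\Ico$ is actually defined, i.e.\ $\Lco \neq \optlength$; the degenerate case $\Lco = \optlength$ is already handled by step~\ref{srcline:selalg.optcheck} of \coolalg{}, which terminates early with the correct answer. Evaluability of $f_l$ and $f_u$ in $\Oh(1)$ per index is immediate once $\underline{l}$ and $\overline{l}$ have been precomputed from $\Sigma_{\Ico}$ and $|\Ico|$, which in turn takes only $\Oh(n)$ time after $\Ico$ has been assembled in step~\ref{srcline:selalg.partition}; this preprocessing fits within the $\Th(n)$ budget already allotted to the early steps of the algorithm.

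With admissibility in hand, \wref{thm:linear-alg} gives time and space $\Th(n + |\candidatesmultiset|)$, and \wref{lem:linear-candidates} bounds $|\candidatesmultiset| \in \Th(\min(k,n))$ in the worst case. Since $\min(k,n) \leq n$, both resources are $\Oh(n)$, which is what the corollary claims. The distinction between $k \leq n$ (where $|\Ico| = \min(k-1,n)$ arises from the cut-off at $L^{(k)}$) and $k > n$ (where $\Ico = [1..n]$ by \wref{def:ourindexset}) is already folded into the $\min(k,n)$ bound, so no separate case analysis is required at this final step.

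The only mildly delicate point, and the one I would spend the most care on, is bookkeeping around the early-termination branch: I want to make sure that whenever \coolalg{} proceeds past step~\ref{srcline:selalg.optcheck} we truly have $\Lco \neq \optlength$, so that \wref{lem:smartbounds-admissible} applies and the later selection by \wref{cor:opt-is-k'th-cand} returns the right value. This is the piece that ties the preconditions of our two main tools to the actual control flow of the algorithm; everything else is a direct substitution of the linear candidate bound into the linear-time algorithm bound.
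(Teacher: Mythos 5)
Your proposal is correct and follows exactly the paper's own route: the paper likewise cites $\coolalg{\allpieces(L_i,\overline{l}),\allpieces(L_i,\underline{l})}$ as the witness via \wref{thm:linear-alg}, \wref{lem:smartbounds-admissible} and \wref{lem:linear-candidates}. Your additional care about the early-termination branch and the $\Oh(1)$ evaluability of the bounding functions after precomputing $\Sigma_{\Ico}$ is a sound elaboration of details the paper leaves implicit.
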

\begin{proof}
  Algorithm $\coolalg{\allpieces(L_i,\overline{l}), \allpieces(L_i,\underline{l})}$
  serves as a witness via \wref{thm:linear-alg}, \wref{lem:smartbounds-admissible}
  and \wref{lem:linear-candidates}.
\end{proof}
A simple adversary argument shows that a sublinear algorithm is impossible;
since the input is not sorted, adding a sufficiently large stick breaks any
algorithm that does not consider all sticks.
We have thus found an asymptotically optimal algorithm. 

Given its easy structure and elementary nature~-- we need but two
calls to a selection algorithm, and in fact just a single one for the typical case $k\ge n$~-- 
our method is also hard to beat in practice 
(as reported in the introduction and shown in \cite{ReitzigWild2015} 
for the apportionment variant of the algorithm).

\section*{Acknowledgments}

Erel Segal-Halevi\footnote{\url{http://cs.stackexchange.com/users/1342/}}
posed the original question~\cite{csse-question} on Computer Science Stack Exchange.
Our approach is based on observations in the answers by
Abhishek Bansal (user1990169\footnote{\url{http://cs.stackexchange.com/users/19311/}}),
InstructedA\footnote{\url{http://cs.stackexchange.com/users/20169/}} and
FrankW\footnote{\url{http://cs.stackexchange.com/users/13022/}}.
Hence, even though the eventual algorithm and its presentation have been developed 
and refined offline with the use of a blackboard and lots of paper, 
the result has been the product of a small ``crowd'' collaboration made possible 
by the Stack Exchange platform.

We thank Chao Xu for pointing us towards the work by \textcite{Cheng2014}, 
and for providing the observation we utilize in \wref{sec:selection-algorithm}.

\printbibliography

\clearpage
\appendix
\section{Notation Index}
\label{app:notations}

In this section, we collect the notation used in this paper.
Some might be seen as ``standard'', but we think
including them here hurts less than a potential 
misunderstanding caused by omitting them.

\subsection*{Generic Mathematical Notation}
\begin{notations}
\notation{$[1..n]$}
	The set $\{1, \dots, n\} \subseteq \N$.
\notation{$\lfloor x\rfloor$, $\lceil x\rceil$}
	floor and ceiling functions, as used in \cite{ConcreteMathematics}.
\notation{$\ln n$}
	natural logarithm.
\notation{$\log^2 n$}
  $(\log n)^2$
\notation{$\harm{n}$}
	$n$th harmonic number; $\harm n = \sum_{i=1}^n 1/i$.
\notation{$p_n$}
  $n$th prime number.
\notation{$\mset{A}$}
  multisets are denoted by bold capital letters.
\notation{$\mset{A}(x)$}
	multiplicity of $x$ in $\mset{A}$, i.\,e., we are using the function notation of multisets here.
\notation{$\mset{A} \uplus \mset{B}$}
	multiset union; multiplicities add up.
\notation{$\mset{L}^{(k)}$}
	The $k$th~largest element of multiset $\mset{L}$ (assuming it exists);\\
	if the elements of $\mset{L}$ can be written in non"=increasing order, 
	$\mset{L}$ is given by 
	  $\mset{L}^{(1)} \ge \mset{L}^{(2)} \ge \mset{L}^{(3)} \ge \cdots$.\\[1ex]
	\textit{Example:} 
	  For $\mset{L} = \{ 10,10,8,8,8,5 \}$, we have $\mset{L}^{(1)} = \mset{L}^{(2)} = 10$,
	  $\mset{L}^{(3)} = \mset{L}^{(4)} = \mset{L}^{(5)} = 8$ and $\mset{L}^{(6)} = 5$.
\end{notations}

\subsection*{Notation Specific to the Problem}
\begin{notations}
\notation{stick}
	one of the lengths of the input, before any cutting.
\notation{piece}
	one of the lengths after cutting; each piece results
	from one input stick after some cutting operations.
\notation{maximal piece}
	piece of maximal length (after cutting).
\notation{$n$}
	number of sticks in the input.
\notation{$\mset L$, $L_i$, $L$}
	$\mset L = \{L_1,\ldots,L_n\}$ with $L_i \in \Q_{>0}$ for all $i \in [1..n]$ contains 
	the lengths of the sticks in the input.\\
	We use $L$ as a free variable that represents (bounds on) input stick lengths.
\notation{$k$}
	$k\in\N$, the number of maximal pieces required.
\notation{$l$}
	free variable that represents (bounds on) candidate cut lengths;
	by \wref{lem:monotonicity+step} only $l = \nicefrac{L_i}{j}$ for $j\in\N$ have to be considered.
\notation{$\optlength$}
	the optimal cut length, i.\,e., the cut length that yields
	at least $k$ maximal pieces while minimizing the total
	length of non-maximal (i.\,e.\ waste) pieces.
\notation{$\cuts(L,l)$}
	the number of cuts needed to cut stick $L$ into pieces of lengths $\le l$;
	$\cuts(L,l) = \lceil \frac Ll-1 \rceil$.
\notation{$\lpieces(L,l)$}
	the number of maximal pieces obtainable by cutting stick $L$ into pieces of lengths $\le l$;
	$\lpieces(L,l) = \lfloor \frac Ll \rfloor$.
\notation{$\allpieces(L,l)$}
	the minimal total number of pieces resulting from 
	cutting stick $L$ into pieces of lengths $\le l$;
	$\allpieces(L,l) = \lceil \frac Ll \rceil$.
\notation{$\Cuts(l) = \Cuts(\mset L, l)$}
	total number of cuts needed to cut all sticks into pieces of lengths $\le l$;
	$\Cuts(\mset L,l) = \sum_{L \in \mset L} \cuts(L,l)$.
\notation{$\Lpieces(l) = \Lpieces(\mset L, l)$}
	total number of maximal pieces resulting from cutting stick $L$ into pieces of lengths $\le l$;
	$\Lpieces(\mset L,l) = \sum_{L \in \mset L} \lpieces(L,l)$.
\notation{$\Feasible(l) = \Feasible(\mset L, k, l)$}
	indicator function that is $1$ when $l$ is a feasible length and $0$ otherwise;
	$\Feasible(\mset L, k, l) = [m(\mset L,l) \ge k]$.
\notation{$\candidatesmultiset(I,f_l, f_u)$}
	multiset of candidate lengths $\nicefrac{L_i}{j}$, restricted by the index 
	set~$I$ of considered input sticks $L_i$, and lower resp.\ upper bound on $j$;
	cf.\ \wpref{def:candidates}.
\notation{$\allcandidatesmultiset$}
	The unrestricted (infinite) candidate set 
	$\allcandidatesmultiset = \candidatesmultiset([1..n],1,\infty)$.
\notation{admissible restriction $(I,f_l,f_u)$}
	sufficient conditions on restriction $(I,f_l,f_u)$ to ensure that 
	$\optproblem \in \candidates(I,f_l,f_u)$;
	cf.\ \wpref{lem:admissible-bounds}.
\notation{$I_{>L}$}
  the set of indices of input sticks $L_i > L$; cf.\ \wref{lem:feasible-computation}.
\notation{$\Ico$, $\Lco$}
	$\Ico = I_{> \Lco}$ is our canonical index set with cutoff length 
	$\Lco$ the $k$th~largest input length; 
	cf.\ \wpref{def:ourindexset}.
\notation{$\Sigma_I$}
  assuming $I \subseteq [1..n]$, this is a shorthand for $\sum_{i \in I} L_i$.
\notation{$\underline l$, $\overline l$}
	lower and upper bounds on candidate lengths $\underline l \le l\le \overline l$
	so that $\optlength \in \allcandidates \cap [\underline l, \overline l]$;
	see \wpref{lem:smartbounds-admissible}.
\end{notations}

\section{On the Number of Distinct Candidates}
\label{app:additional-results}

As mentioned in \wref{sec:algorithms}, algorithm \alg{} can profit from removing
duplicates from the candidate multisets during sorting. We will show in the
subsequent proofs that none of the restrictions introduced above cause more than
a constant fraction of all candidates to be duplicates.

We denote with $\candidates(\dots)$ the set obtained by removing duplicates
from the multiset $\candidatesmultiset(\dots)$ with the same restrictions. 

\begin{lemma}\label{lem:quadr-candidates-wc}
  $|\candidates(\Ico,1,k)| \in \Th\bigl( |\candidatesmultiset(\Ico,1,k)| \bigr)$ 
  in the worst case.
\end{lemma}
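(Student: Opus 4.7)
Since $|\candidates(\Ico,1,k)| \le |\candidatesmultiset(\Ico,1,k)|$ holds trivially, it suffices to exhibit a family of instances on which all candidates are pairwise distinct and $|\Ico|$ attains its maximum $\min(k-1,n)$; combined with \wref{lem:quadr-candidates}, such instances witness $|\candidates| = k\cdot\min(k-1,n) = \Th(|\candidatesmultiset|)$. The plan is to take the input lengths to be distinct primes all strictly greater than $k$.

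Concretely, I would fix $n$ distinct primes $L_1 > L_2 > \cdots > L_n$, all exceeding $k$, and (only in the regime $k \le n$) additionally arrange $L_1 > 2 L_k$; standard prime-counting estimates easily supply such primes in arbitrary quantity and with arbitrary gaps.

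Two small verifications remain. First, $\Ico$ must be defined with $|\Ico|=\min(k-1,n)$. For $k>n$ this is automatic, as $\Lco = 0$ and $\Ico = [1..n]$. For $k \le n$ we have $\Lco = L_k$; for every sufficiently small $\varepsilon>0$, the assumption $L_1 > 2L_k$ gives $L_1 > 2(L_k+\varepsilon)$ and also $L_i > L_k + \varepsilon$ for all $i \le k-1$ (since the $L_i$ are distinct), so
\[
  \Lpieces(L_k+\varepsilon) \;\geq\; 2 + (k-2) \;=\; k
\]
by direct computation from the definition. Hence $L_k+\varepsilon$ is feasible, $\Lco \ne \optlength$ by \wref{cor:optismaxfeasible}, and $\Ico = \{1,\dots,k-1\}$ as required. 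Second, suppose two candidates coincide: $L_i/j = L_{i'}/j'$ with $i,i' \in \Ico$ and $j,j' \in [1..k]$; then $L_i \cdot j' = L_{i'} \cdot j$. If $i=i'$ this forces $j=j'$. If $i \ne i'$, the prime $L_i$ divides $L_{i'} \cdot j$ but not the prime $L_{i'}$, so $L_i \mid j$; this contradicts $j \le k < L_i$. Therefore all $k \cdot |\Ico|$ candidates are distinct, so $|\candidates| = |\candidatesmultiset|$ on this family, and the claim follows.

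The only subtlety is arranging $\Lco \ne \optlength$ when $k \le n$ so that $\Ico$ is defined per \wref{def:ourindexset}; the prime-gap condition $L_1 > 2L_k$ handles this cleanly. Beyond that, the proof reduces to a one-line divisibility argument that exploits the gap between candidate denominators (bounded by $k$) and input lengths (exceeding $k$).
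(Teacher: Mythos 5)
Your proof is correct, and it takes a noticeably cleaner route than the paper's. The paper also uses pairwise coprime stick lengths, but it fixes the instance to the first $n$ primes $\{p_1,\dots,p_n\}$; there, candidates $\nicefrac{L_i}{j}$ with $L_i \mid j$ can still collide across sticks, so the paper must bound the number of such potential duplicates by $\sum_i \lfloor \nicefrac{k}{p_i}\rfloor$ and show via $\nicefrac{k}{p_k} \le \nicefrac23$ that at most a constant fraction of the multiset is lost, yielding $|\candidates| \ge \nicefrac{|\candidatesmultiset|}{3}$. Your additional requirement that every prime exceed $k$ kills the only collision mechanism outright (a collision forces $L_i \mid j$ with $1 \le j \le k < L_i$), so you get $|\candidates| = |\candidatesmultiset|$ exactly and avoid the prime-reciprocal estimates entirely; the price is that the witness family depends on $k$ rather than being one fixed instance. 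You are also more careful than the paper on a point it glosses over: explicitly arranging $\Lco \ne \optlength$ (via $L_1 > 2L_k$ and the feasibility of $L_k+\varepsilon$) so that $\Ico$ is defined per \wref{def:ourindexset} and $|\Ico| = \min(k-1,n)$. The only degenerate case outside your argument is $k=1$, where $\Lco$ is optimal and the candidate multiset is empty anyway; the paper ignores this as well, so it is not a gap worth charging you with.
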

\begin{proof}
  Let for short $C \ce |\candidates(\Ico, 1, k)|$ and 
  $\candidatesmultiset \ce \candidatesmultiset(\Ico,1,k)$. 
  It is clear that $C \leq |\candidatesmultiset|$; 
  we will show now that $C \in \Om(|\candidatesmultiset|)$ in the worst case.
  
  Consider instance
    \[ \mset{L}_{\mathrm{primes}} = \{p_n, \dots, p_1\} \]
  with $p_i$ the $i$th prime number and any $k \in \N$; note that $L_i = p_{n-i+1}$.
  Let for ease of notation $n' \ce \min(k,n+1)$; 
  note that $\Lco = \mset{L}_{\mathrm{primes}}^{(n')}$ if $k \leq n$.
  We have $|\Ico| = \min(k-1,n)$ because the $L_i$ are 
  pairwise distinct, and therefore $|\candidatesmultiset| = k|\Ico| = k(n'-1)$. 
  Since the $L_i$ are also pairwise coprime, all candidates 
  $\nicefrac{L_i}{j}$ for which $j$ is \emph{not} a multiple of $L_i$
  are pairwise distinct. 
  Therefore, we get 
  \begin{align*}
    C &\geq |\candidatesmultiset| - \sum_{i=n-n'+2}^n \Biggl\lfloor \frac{k}{p_i} \Biggr\rfloor \\
      &\geq |\candidatesmultiset| - \sum_{i=n-n'+2}^n \frac{k}{p_i} \\
      &=    |\candidatesmultiset| - (n'-1)k \cdot \sum_{i=n-n'+2}^n \frac{1}{p_i} \\
      &\geq |\candidatesmultiset| - |\candidatesmultiset| \cdot \frac{n'}{p_{n'}} \\
      &\geq |\candidatesmultiset| - |\candidatesmultiset| \cdot \frac{2}{3} \\
      &=    \frac{|\candidatesmultiset|}{3}.
  \end{align*}
  In particular, we can show that $\nicefrac{k}{p_k} \leq \nicefrac{2}{3}$ by 
  $\nicefrac{k}{p_k} < 0.4$ for $k \geq 20$ \cite[eq.\ (4.20)]{ConcreteMathematics} 
  and checking all $k < 20$ manually; the maximum is attained at $k=2$.
\end{proof}

\begin{lemma}\label{lem:linearithmic-candidates-wc}
  $|\candidates(\Ico,1,\lceil \nicefrac{k}{i} \rceil)| 
    \in \Th\bigl( |\candidatesmultiset(\Ico,1,\lceil \nicefrac{k}{i} \rceil)| \bigr)$ 
  in the worst case.
\end{lemma}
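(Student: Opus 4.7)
The plan is to exhibit an instance on which removing duplicates from the candidate multiset in fact removes nothing, so that $|\candidates| = |\candidatesmultiset|$ and the asymptotic matching follows immediately from \wref{lem:linearithmic-candidates}. The key idea is to choose the stick lengths so that distinct pairs $(i,j) \neq (i',j')$ always yield distinct $\nicefrac{L_i}{j}$. The cleanest way to achieve this is to take pairwise coprime $L_i$ that are all strictly larger than every denominator we ever consider.

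Concretely, I would take $\mset{L}$ to consist of $n$ distinct primes, each strictly greater than $k$~-- say $L_i = p_{N+n-i}$ for $i \in [1..n]$, where $N$ is the smallest index with $p_N > k$. Since the $L_i$ are pairwise distinct, the same argument as in the proof of \wref{lem:quadr-candidates-wc} gives $|\Ico| = \min(k-1,n) = n'-1$, so the size computation already carried out in \wref{lem:linearithmic-candidates} yields
\[
  |\candidatesmultiset(\Ico,1,\lceil \nicefrac{k}{i} \rceil)|
  \;=\; \sum_{i=1}^{n'-1} \biggl\lceil \frac{k}{i} \biggr\rceil
  \;\in\; \Th(k \log n'),
\]
which is the asymptotic worst case.

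To see that no two candidates coincide, suppose $\nicefrac{L_i}{j} = \nicefrac{L_{i'}}{j'}$ with $(i,j) \neq (i',j')$. We must have $i \neq i'$ (else $j = j'$); since $L_i$ and $L_{i'}$ are then distinct primes and hence coprime, the equality $L_i \cdot j' = L_{i'} \cdot j$ forces $L_{i'} \mid j$. But $j \leq \lceil \nicefrac{k}{i} \rceil \leq k < L_{i'}$ by construction, a contradiction. Hence the multiset has no duplicates and $|\candidates(\Ico,1,\lceil \nicefrac{k}{i} \rceil)| = |\candidatesmultiset(\Ico,1,\lceil \nicefrac{k}{i} \rceil)|$; the opposite inequality $|\candidates| \leq |\candidatesmultiset|$ is trivial.

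The only real obstacle is the small bookkeeping of ensuring that $\Ico$ is well"=defined for the chosen instance, i.\,e.\ that $\Lco$ is not optimal. For $k > n$ this is automatic since $\Lco = 0$ by \wref{def:ourindexset}. For $k \leq n$ one arranges it by spreading the largest primes far enough apart so that $\lfloor \nicefrac{L_1}{\Lco} \rfloor \geq 2$ (easy, since we are still free to choose how large the primes are): then $\Lpieces(\Lco) \geq k+1$, and \wref{lem:monotonicity+step} together with \wref{cor:optismaxfeasible} gives $\optlength > \Lco$, so $\Lco$ is not optimal. With this minor tweak the argument above goes through verbatim.
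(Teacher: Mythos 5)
Your argument is correct and establishes the lemma by a genuinely simpler route than the paper. The paper also builds a pairwise-coprime witness, but it uses the \emph{first} $n$ primes $\mset{L}_{\mathrm{primes}}=\{p_n,\dots,p_1\}$ and must then \emph{bound the number of duplicates}: for each $L_i$ it counts the denominators $j$ that are multiples of $p_i$ and shows, via $\sum\nicefrac{1}{p_i}$ and the estimate $\nicefrac{k}{p_k}\le\nicefrac23$, that at most a constant fraction of the multiset collapses under deduplication. By pushing every prime above $k$ you make the candidate multiset duplicate-free outright, since a coincidence $\nicefrac{L_i}{j}=\nicefrac{L_{i'}}{j'}$ with $i\neq i'$ and coprime $L_i,L_{i'}$ forces $L_i\mid j$ (you wrote $L_{i'}\mid j$; the correct divisibilities are $L_i\mid j$ and $L_{i'}\mid j'$, and either yields the contradiction), which is impossible for $1\le j\le\lceil\nicefrac{k}{i}\rceil\le k<L_i$; the size asymptotics then come for free from \wref{lem:linearithmic-candidates}. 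This buys a shorter proof at the cost of a slightly more contrived instance. One step to tighten: the inference ``$\Lpieces(\Lco)\ge k+1$ implies $\Lco$ is not optimal'' is not valid in general~-- the surplus piece could stem from an exact division at $\Lco$ itself (e.g.\ $\mset L=\{4,2\}$, $k=2$), in which case $\Lco$ remains the $k$th largest element of $\allcandidatesmultiset$ and is optimal. For your instance it is rescued by the observation that no $L_i$ is a proper multiple of the prime $\Lco$, so $\Lco$ has multiplicity one in $\allcandidatesmultiset$ and at least $k$ candidates lie strictly above it, whence $\optlength>\Lco$ by \wref{lem:opt-is-kth-cand}; you should say this explicitly. (The paper's own proof glosses over the well-definedness of $\Ico$ entirely, so your attention to it is welcome.)
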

\begin{proof}
  Let for short $C \ce |\candidates(\Ico, 1, \lceil \nicefrac{k}{i} \rceil)|$ and 
  $\candidatesmultiset \ce \candidatesmultiset(\Ico,1,\lceil \nicefrac{k}{i} \rceil)$. 
  It is clear that $C \leq |\candidatesmultiset|$; 
  we will show now that $C \in \Om(|\candidatesmultiset|)$ in the worst case.

  We make use of the same instance $(\mset{L}_{\mathrm{primes}}, k)$ 
  we used in the proof of \wref{lem:quadr-candidates-wc},
  with a similar calculation:
  \begin{align*}
    C &=    |\candidatesmultiset| - \sum_{i=n-n'+2}^n \Biggl\lfloor 
                     \frac{\lceil \nicefrac{k}{i} \rceil}{p_i} 
                   \Biggr\rfloor \\
      &\geq |\candidatesmultiset| - \sum_{i=n-n'+2}^n \frac{\frac{k}{i} + 1}{p_i} \\
      &\geq |\candidatesmultiset| - \frac{n'-1}{p_{n'-1}} \cdot \biggl(1 + \frac{k}{n'-1}\biggr) \\
      &\geq |\candidatesmultiset| - \frac{2}{3} \cdot \biggl( 1 + \frac{k}{n'-1} \biggr) \\
      &\in \Th(|\candidatesmultiset|)
  \end{align*}
  because $\nicefrac{k}{n'} \in \oh(k \log n') = \oh(|\candidatesmultiset|)$.
\end{proof}

\begin{lemma}%
  \label{lem:linear-candidates-wc}%
  $|\candidates(\Ico,\allpieces(L_i,\overline{l}), \allpieces(L_i,\underline{l}))| 
    \in \Th\bigl( |\candidatesmultiset(\Ico,\allpieces(L_i,\overline{l}), \allpieces(L_i,\underline{l}))| \bigr)$ 
  in the worst case.
\end{lemma}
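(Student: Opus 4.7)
The plan is to mirror the structure of \wref{lem:quadr-candidates-wc} and \wref{lem:linearithmic-candidates-wc}. The inequality $|\candidates(\ldots)| \leq |\candidatesmultiset(\ldots)|$ is immediate, so the task is to exhibit a worst-case instance on which the set side is a constant fraction of the multiset side while $|\candidatesmultiset|$ simultaneously attains its worst-case order $\Theta(\min(k,n))$ established in \wref{lem:linear-candidates}. I would reuse the pairwise-coprime primes instance $\mset{L}_{\mathrm{primes}} = \{p_1, \ldots, p_n\}$ from the previous two proofs, this time choosing $k = n$.

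The key combinatorial observation is that any collision $\nicefrac{L_i}{j} = \nicefrac{L_{i'}}{j'}$ with $i \neq i'$ cross-multiplies to $L_i \cdot j' = L_{i'} \cdot j$, and pairwise coprimality of the $L_i$ then forces $L_i \mid j$, so $j \geq L_i$. Under the sandwich-bound restriction, however, $j \leq \allpieces(L_i, \underline{l}) = \lceil L_i/\underline{l} \rceil$, which is strictly smaller than $L_i$ whenever $\underline{l} \geq 2$ and $L_i \geq 3$. Both conditions hold on the chosen instance: with $k = n$, \wref{def:ourindexset} gives $\Lco = L^{(n)} = p_1 = 2$, so $\Ico = \{2,\ldots,n\}$ excludes the index with $L_i = 2$ (leaving only $L_i \geq 3$), and the definition in \wref{lem:smartbounds-admissible} directly yields $\underline{l} \geq \Lco = 2$. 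Hence no cross-stick collisions can occur; within a single stick the map $j \mapsto \nicefrac{L_i}{j}$ is injective, so there are no within-stick duplicates either. Consequently $|\candidates| = |\candidatesmultiset|$ for this instance.

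A small side check is to confirm that $\Lco = 2$ is not optimal, so that $\Ico$ is actually defined (\wref{def:ourindexset}); this follows because $\Lpieces(2) = 1 + \sum_{i=2}^n \lfloor p_i/2 \rfloor$ is super-linear in $n$ and hence exceeds $k = n$ with room to spare, placing the jump of \Feasible far above $\Lco$. Combining the above with \wref{lem:linear-candidates} gives $|\candidates| = |\candidatesmultiset| \in \Theta(|\Ico|) = \Theta(n) = \Theta(\min(k,n))$, which is the claimed worst-case bound. The only (mild) obstacle is the $\max$ in the definition of $\underline{l}$; picking $k = n$ sidesteps any careful estimation of $\Sigma_{\Ico}/(k+|\Ico|)$, since $\Lco$ alone already secures $\underline{l} \geq 2$.
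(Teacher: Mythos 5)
Your proposal is correct and follows essentially the same route as the paper's proof: the same pairwise"=coprime primes instance, the same observation that a collision $\nicefrac{L_i}{j} = \nicefrac{L_{i'}}{j'}$ forces $L_i \mid j$ and hence $j \ge L_i$, and the same strategy of showing that the sandwich bounds keep every admissible $j$ strictly below $L_i$. The only (harmless) difference is parametric: you take $k = n$ so that $\underline{l} \ge \Lco = 2$ immediately gives $\allpieces(L_i,\underline{l}) \le \lceil L_i/2\rceil < L_i$, whereas the paper fixes $k$ constant and bounds $\allpieces(L_i,\underline{l}) \le 2k < L_i$ via $p_n \sim n\ln n$; both choices yield a worst"=case family with $|\candidates| \in \Om(|\candidatesmultiset|)$.
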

\begin{proof}
  Let again for short
    $C \ce |\candidates(\Ico,\allpieces(L_i,\overline{l}), \allpieces(L_i,\underline{l}))|$ 
  and 
    $\candidatesmultiset \ce 
       \candidatesmultiset(\Ico,\allpieces(L_i,\overline{l}), \allpieces(L_i,\underline{l}))$.
  It is clear that $C \leq |\candidatesmultiset|$; 
  we will show now that $C \in \Om(|\candidatesmultiset|)$ in the worst case.
   
  We make use of our trusted instance $(\mset{L}_{\mathrm{primes}}, k)$ again. 
  We show that very prime yields at least one candidate unique to itself, 
  as long as $k$ is constant (which is sufficient for a worst"=case argument).
   
  Recall that $\overline{l} > \underline{l}$ so every $L_i$ \emph{has} some
  $j$; we note furthermore that for fixed $i \in \Ico$,
    \[            j 
       \wwrel\leq \allpieces(L_i, \underline{l})
       \wwrel=    \left\lceil L_i \bigg/%
                                     \frac{\Sigma_{\Ico}}%
                                          {k + |\Ico|}  
                  \right\rceil
       \wwrel=    \left\lceil p_i \cdot %
                                     \frac{k + |\Ico|}%
                                          {\sum_{i'\in \Ico} p_{i'}}  
                  \right\rceil
       \wwrel\leq \left\lceil p_i \cdot %
                       \frac{2k}%
                            {p_n}               
                        \right\rceil 
       \wwrel\leq 2k
       \wwrel<    L_i \]
  for big enough $n$, in particular because 
  $p_n \sim n \ln n$~\cite[p\,110]{ConcreteMathematics}.
  That is, every $L_i$ with $i \in \Ico$ yields at least one $\nicefrac{L_i}{j}$ 
  no other does, since all $L_i$ are co"=prime.
  Hence $C \geq |\Ico| \in \Th(|\candidatesmultiset|)$.
\end{proof}

\end{document}